\def\ScaleIfNeeded{%
  \ifdim\Gin@nat@width>\linewidth
    \linewidth
  \else
    \Gin@nat@width
  \fi
} 
\let\Oldincludegraphics\includegraphics
 \gdef\includegraphics{\@ifnextchar[{\Oldincludegraphics}{\Oldincludegraphics[width=\ScaleIfNeeded]}}%
\newcommand{\segment}{segment\xspace}
\newcommand{\segments}{segments\xspace}
\newcommand{\name}{DyNAMiC Workbench\xspace}
\newcommand{\seg}[2]{\texttt{#1\textsuperscript{#2}}}
\newcommand{\cplx}[1]{\texttt{#1}}
\newcommand{\ratex}{\rho}
\newcommand{\rate}[1]{\ratex(#1)}
\newcommand{\fig}[1]{Fig.~#1}
\newcommand{\subfigure}[1]{\textbf{(#1)}}
\newcommand{\sect}[1]{Sec.~#1}
\newcommand{\alg}[1]{Alg.~#1}
\newcommand{\eqn}[1]{Eq.~#1}
\newcommand{\lem}[1]{Lemma~#1}
\newcommand{\term}[1]{``#1''}
\newcommand{\latin}[1]{\emph{#1}}
\newcommand*{\llbrace}{%
  \BeginAccSupp{method=hex,unicode,ActualText=2983}%
    \textnormal{\usefont{OMS}{lmr}{m}{n}\char102}%
    \mathchoice{\mkern-4.05mu}{\mkern-4.05mu}{\mkern-4.3mu}{\mkern-4.8mu}%
    \textnormal{\usefont{OMS}{lmr}{m}{n}\char106}%
  \EndAccSupp{}%
}
\newcommand*{\rrbrace}{%
  \BeginAccSupp{method=hex,unicode,ActualText=2984}%
    \textnormal{\usefont{OMS}{lmr}{m}{n}\char106}%
    \mathchoice{\mkern-4.05mu}{\mkern-4.05mu}{\mkern-4.3mu}{\mkern-4.8mu}%
    \textnormal{\usefont{OMS}{lmr}{m}{n}\char103}%
  \EndAccSupp{}%
}
\newcommand{\set}[1]{\left\{#1\right\}}
\newcommand{\mset}[1]{\llbrace #1 \rrbrace}
\newcommand{\seq}[1]{\langle #1 \rangle}
\newcommand{\li}[1]{\mathcal{#1}}
\newcommand{\type}[1]{\mathrm{#1}}
\newcommand{\mat}[1]{\mathbf{#1}}
\theoremstyle{plain}
\newtheorem{theorem}{Theorem}
\newtheorem{lemma}{Lemma}
\newtheorem{corollary}{Corollary}
\theoremstyle{definition}
\newtheorem{defn}{Definition}
\renewcommand{\segment}{domain\xspace}
\renewcommand{\segments}{domains\xspace}
\DeclareSIUnit\Molar{\text{M}}
\DeclareSIUnit\kcal{\text{kcal}}
\renewcommand{\Pr}[1]{\mathbb{P}\left[#1\right]}
\newcommand{\Ex}[1]{\langle #1 \rangle}
\renewcommand{\vec}[1]{\mathbf{#1}}
\title{A domain-level DNA strand displacement reaction enumerator allowing arbitrary non-pseudoknotted secondary structures\vspace*{-0.2cm}}
\author[1]{Casey Grun}
\affil[1]{Harvard University} 
\author[2]{Karthik Sarma} 
\author[2]{Brian Wolfe}
\author[3]{Seung Woo Shin}
\affil[3]{University of California, Berkeley}
\author[2]{Erik Winfree}
\affil[2]{California Institute of Technology}
\date{\today\vspace*{-0.5cm}}
\begin{document}
\maketitle
\section{Abstract}

DNA strand displacement systems have proven themselves to be fertile substrates for the design of programmable molecular machinery and circuitry.  
Domain-level reaction enumerators provide the foundations for molecular programming languages by formalizing DNA strand displacement mechanisms and modeling interactions at the \term{domain} level -- one level of abstraction above models that explicitly describe DNA strand sequences.  Unfortunately, the most-developed models currently only treat pseudo-linear DNA structures, while many systems being experimentally and theoretically pursued exploit a much broader range of secondary structure configurations.  Here, we describe a new domain-level reaction enumerator that can handle arbitrary non-pseudoknotted secondary structures and reaction mechanisms including association and dissociation, 3-way and 4-way branch migration, and direct as well as remote toehold activation.  To avoid polymerization that is inherent when considering general structures, we employ a time-scale separation technique that holds in the limit of low concentrations.  This also allows us to ``condense'' the detailed reactions by eliminating fast transients, with provable guarantees of correctness for the set of reactions and their kinetics.  We hope that the new reaction enumerator will be used in new molecular programming languages, compilers, and tools for analysis and verification that treat a wider variety of mechanisms of interest to experimental and theoretical work. We have implemented this enumerator in Python, and it is included in the DyNAMiC Workbench Integrated Development Environment.

\section{Introduction}\label{sec:introduction}

\begin{figure}[!htb]
	\includegraphics{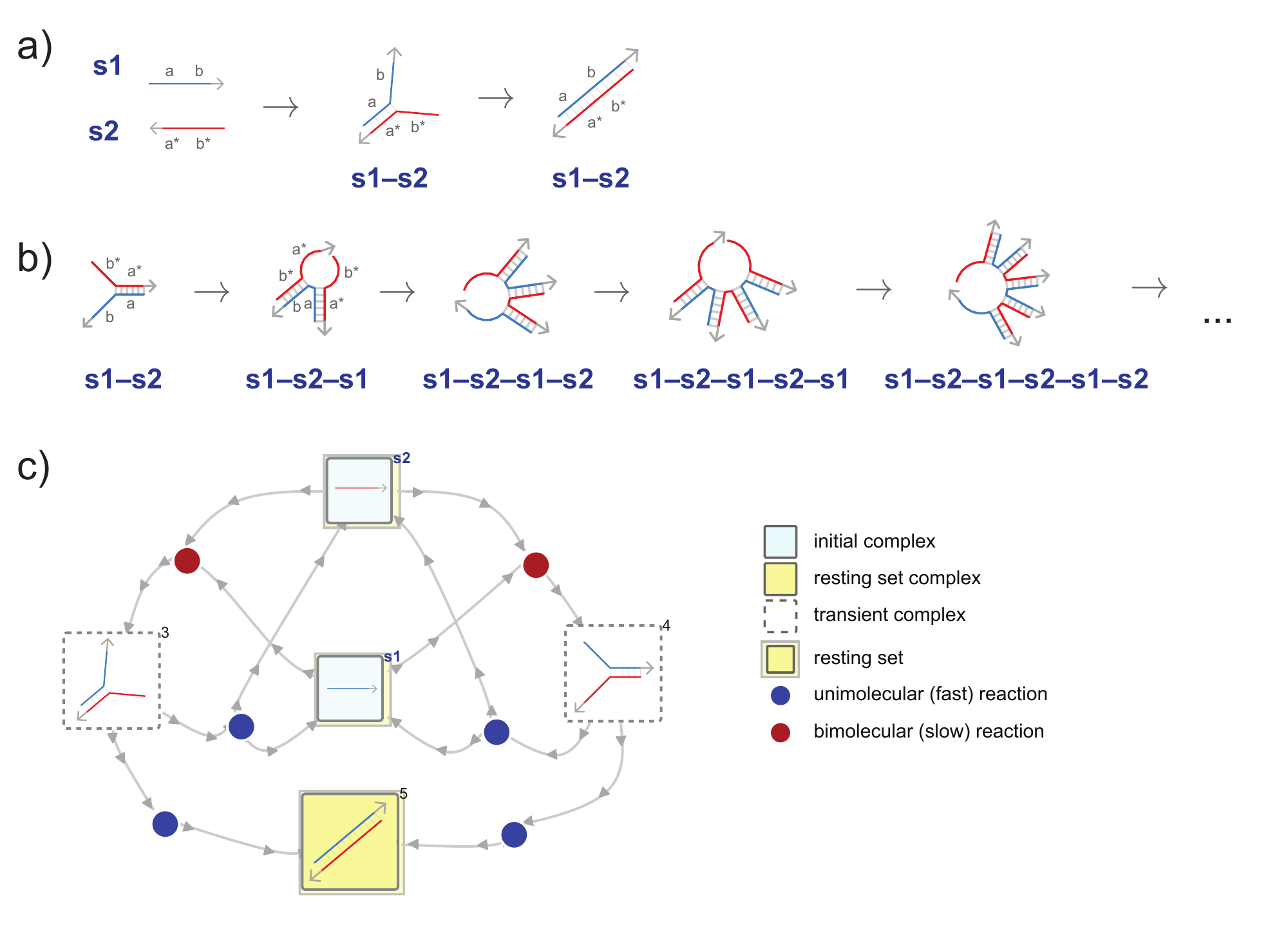} \vspace{-7mm}
	\caption[Polymerization]{The Polymerization Problem. Reaction enumerators must avoid generation of infinitely-long polymers.  
	\subfigure{a} Intended behavior for a simple system of two strands. Binding between the red strand and the blue strand results in the formation of a stable duplex (the binding could be nucleated by binding between either \segment \seg{a}{}/\seg{a}{*} or \segment \seg{b}{}/\seg{b}{*}---here we show only the former for simplicity).
	\subfigure{b} Pathological behavior in an enumerator without a separation of timescales. Repeated bimolecular association reactions are allowed to occur before the intramolecular binding reaction, generating implausibly long polymers. Depending on the order in which the enumerator searches for reactions, this could prevent such an enumerator from \emph{ever} finding the kinetically- and thermodynamically-favored complex---the simple duplex.
	\subfigure{c} Reaction network generated by our enumerator. Separation of reactions into \term{fast} (unimolecular) and \term{slow} (bimolecular) allows the classification of certain complexes (dashed boxes) as \term{transients}; by prohibiting the transient complexes from participating in bimolecular reactions, the pathway generated by our enumerator converges on the intended duplex.}
	\label{fig:polymer}
\end{figure}

A series of inspiring demonstrations during the last 15 years have shown DNA to be a robust and versatile substrate for nanoscale construction and computation\cite{Zhang:2011hw,Krishnan:2011ks,Seeman:2010fa,Chen:2010un}. DNA has been used to build tile assemblies\cite{Winfree:1998fs,Rothemund:2004dm}; arbitrary shapes\cite{Rothemund:2006hb,Wei:2012ip}; molecular motors\cite{Yurke:2000ex,Turberfield:2003uo,Venkataraman:2007bw}, walkers\cite{Shin:2004gn,Sherman:2004tv,Yin:2008gb,Omabegho:2009dr}, and robots\cite{Ding:2006hn,Muscat:2011he}; analog circuits that perform amplification of concentrations\cite{Zhang:2007gt,Yin:2008gb,Li:2012gx} and polymer lengths\cite{Dirks:2004cb,Yin:2008gb}; synthetic transcriptional circuits\cite{Kim:2011synthetic}; molecular logic gates\cite{Seelig:2006hw}; Turing-universal stack machines\cite{Qian:2011to}; and application-specific analog and digital circuits that perform non-trivial calculations\cite{Qian:2011hu,Qian:2011gh,Chen:2013kf,Zhang:2011hw}.
A common abstraction is to describe these complex systems in terms of a number of \term{\segments}---contiguous sequences of nucleotides that are intended to participate in hybridization as a group; complementary \segments are intended to interact, and all other \segments are not. Once a system has been described in terms of \segments, a computational sequence design package can generate a sequence of nucleotides for each \segment in the system, implementing the desired complementarity rules\cite{Zadeh:2011ku,Zhang:2011uc}. 
Developing such increasingly complex dynamic DNA nanosystems requires tools both for the automated design of these systems, but also for software-assisted verification of their behavior\cite{Phillips:2009ec,Grun:2014vd}.

One mode of verification is to check whether a set of DNA complexes can react to produce the desired product species (or various anticipated intermediate species). Given a domain-level model, this can be done by enumerating the network of all possible reactions that can occur, starting from a finite set of initial complexes. The result is a chemical reaction network (equivalently a Petri net), connecting these initial complexes to some set of intermediate complexes. For the sake of this paper, we use the term \term{enumeration} to refer to the process of generating a chemical reaction network, given a set of initial complexes and a set of rules for their possible interactions.

This enumeration process, by itself, does not fully describe the expected kinetic behavior of the system, because the concentration (or number of copies) of each initial species is not yet specified.
However, the
enumerated network of reactions forms the basis of a system's dynamics, and therefore a great deal can often be observed about a system's behavior simply by examining the network of reactions. Unintended side reactions that could hamper the kinetics of the system or cause unanticipated failure modes can be identified.  Furthermore, given initial concentrations (or counts), the chemical reaction network can be simulated using deterministic ordinary differential equations (ODEs) or stochastic methods, and in some cases global properties of the dynamics can be analyzed.
In dynamic DNA nanotechnology, this type of enumeration, like the design, has often been performed by hand; this is clearly infeasible for all but the most simple systems, as the number of possible unintended interactions between two species grows quadratically in the number of \segments.  Worse, the combinatorial structure inherent in the rearrangement and assembly of strands can give rise to an exponential---potentially even infinite---number of species.  Thus, automated methods for performing the enumeration are essential, and for tractability it is necessary that they focus attention on only the most relevant possible complexes and reactions.
Rule-based models developed for concisely representing combinatorial structures in systems biology, such as BioNetGen\cite{faeder2009rule} or Kappa\cite{feret2009internal}, could in principle be used for DNA systems, but appropriate rules for the relevant DNA interactions would have to be provided by the user for each system.

Several previous efforts have explored the \latin{in silico} testing and verification of dynamic DNA nanotechnology systems using a built-in set of rules. 
Most notably, Phillips and Cardelli have demonstrated the DNA circuit compiler and analysis tool ``Visual DSD,'' which was subsequently extended by Lakin et al.\cite{Phillips:2009ec,Lakin:2011vo,Lakin:2012hb} The analysis component of DSD allows for enumeration of possible reactions between a set of initial complexes, and then simulation of those reactions---either using a system of ODEs or a stochastic simulation. DSD also allows a hybrid ``just-in-time'' enumeration model which interleaves enumeration with simulation; in this model, reactions are stochastically selected for pursuit by the enumerator based on their estimated probabilities (rather than the entire network being enumerated exhaustively) . However, DSD has a limited system of representation for strand displacement systems; in particular, the formalism cannot represent branched junction, hairpin, or multiloop structures. Similarly, the DSD model cannot currently express reactions relying on looped or branched intermediates, such as 4-way branch migration or remote toehold-mediated 3-way branch migration\cite{Genot:2011gt}. These structural motifs and reaction mechanisms have proven useful in recent experimental work for self-assembly\cite{Yin:2008gb,Dabby:2013uq}, locomotion\cite{Venkataraman:2007bw,Muscat:2011he}, imaging\cite{Choi:2010bm}, and computation. Earlier work by Nishikawa et al. included a similar joint enumeration and simulation model that allows arbitrary secondary structures using strands with ``abstract bases'' analogous to our ``\segments''; combinatorial explosion was controlled by only allowing interactions between complexes above some threshold concentration\cite{Nishikawa:2001cp}. More recent work by Kawamata et al. uses a graph-based model to represent and simulate reactions between species\cite{Kawamata:2011wv,Kawamata:2012abstraction}.

Here, our goal was to develop a \segment-level reaction enumerator that (a) separates the enumeration and simulation steps so that the resulting reaction network can be rigorously analyzed, (b) models a wide range secondary structures and conformational mechanisms used in dynamic DNA nanotechnology, and (c) controls combinatorial explosion using an approximation that is valid in a well-defined limit, thus allowing simplification of the resulting network.
Our enumerator, which we call Peppercorn, exhaustively determines the possible hybridization reactions between a set of initial complexes, as well as between any complexes generated as products of these reactions, yielding a complete reaction network.  The enumerator can represent arbitrary non-pseudoknotted complexes, and handles 3-way and 4-way branch migration, including reactions initiated by remote toeholds.
It enforces a separation of timescales in order to avoid implausible polymerization that would otherwise result from this degree of generality.
Finally, it includes a scheme for \term{condensing} reactions to consider only slow transitions between groups of species---excluding transient intermediate complexes. We have implemented the Peppercorn enumerator in Python, and it has been included as part of the DyNAMiC Workbench Integrated Development Environment\cite{Grun:2014vd}, which is available online with a graphical user interface (\url{www.molsys.net/workbench}).

The driving issue in the development of our enumerator has been the need to model as wide a range of secondary structures as possible.  We chose non-pseudoknotted secondary structures, a class that includes not only the linear and ``hairy linear'' structures of DSD, but also branched tree-like structures with hairpin loops, bulge loops, interior loops, and multiloop junctions (\fig{\ref{fig:structures}}).   (Pseudoknotted structures, in which branches of a tree can contact each other to form cycles, require more complicated energy models, have geometrical constraints, and are not extensively used in dynamic DNA nanotechnology yet, so version 1.0 of Peppercorn does not allow them.  See Appendix D and \fig{\ref{fig:pseudoknots}} for further explanation.)

The choice to allow arbitrary non-pseudoknotted structures, and arbitrary hybridization interactions between them,  introduces some problems that don't arise in restricted models such as Visual DSD.  
For example, some sets of species may allow for the generation of infinitely long polymers (\fig{\ref{fig:polymer}}); a naive attempt to enumerate all possible reactions in such a system would prevent the algorithm from terminating. In the laboratory, these polymers will not occur at low concentrations if kinetically faster reactions are possible that preclude polymerization. Therefore, the enumerator must provide \emph{some} plausible approximation for a separation of kinetic timescales.  We find here that the low-concentration limit, in which unimolecular reactions are sufficiently faster than bimolecular reactions, provides a clean and rigorous basis for a semantics that avoids the spurious polymerization issue.

Finally, it is desirable that the results of this enumeration be \emph{interpretable} to the user---an excessively complex reaction network will be useful only for performing kinetic simulations and will make it hard to distinguish intended reaction pathways from unintended side reactions. Simplification may be necessary for further verification (for instance, comparison of the system's predicted behavior to a specification, given in a higher level language). It is essential that any simplification has some guaranteed correspondence to the full, detailed reaction network. Given a presumed separation of timescales, one way to make the network more interpretable is to show only the ``slow'' reactions, and to assume that ``fast'' reactions happen instantaneously. We will explore this idea in detail later.

As a motivating example, we show the 3-arm junction system of Yin et al. in \fig{\ref{fig:catalytic-3arm}}\cite{Yin:2008gb}; this system involves the assembly of a 3-arm junction from a set of metastable hairpins, triggered by the addition of a catalyst (\fig{\ref{fig:catalytic-3arm}a}). The catalyst opens one hairpin by toehold-mediated branch migration, exposing another toehold (previously sequestered within the hairpin stem); this toehold similarly opens the second hairpin, which can open the third hairpin. In the final step, a four-strand intermediate complex (containing the three hairpins and the initiator) can collapse to the final 3-arm junction structure, releasing the single-strand catalyst. Even for this simple process, the full, detailed reaction network is rather complicated (because of the separate toehold-nucleation steps necessary for attachment of each hairpin to the growing structure, as well as various unintended side reactions). These steps result in transient intermediates that quickly decay to a more stable \term{resting complex.} By skipping these transient intermediates (and depicting only bimolecular transitions between sets of resting complexes), the reaction network can be greatly simplified (\fig{\ref{fig:catalytic-3arm}b}).

\begin{figure}
	\includegraphics{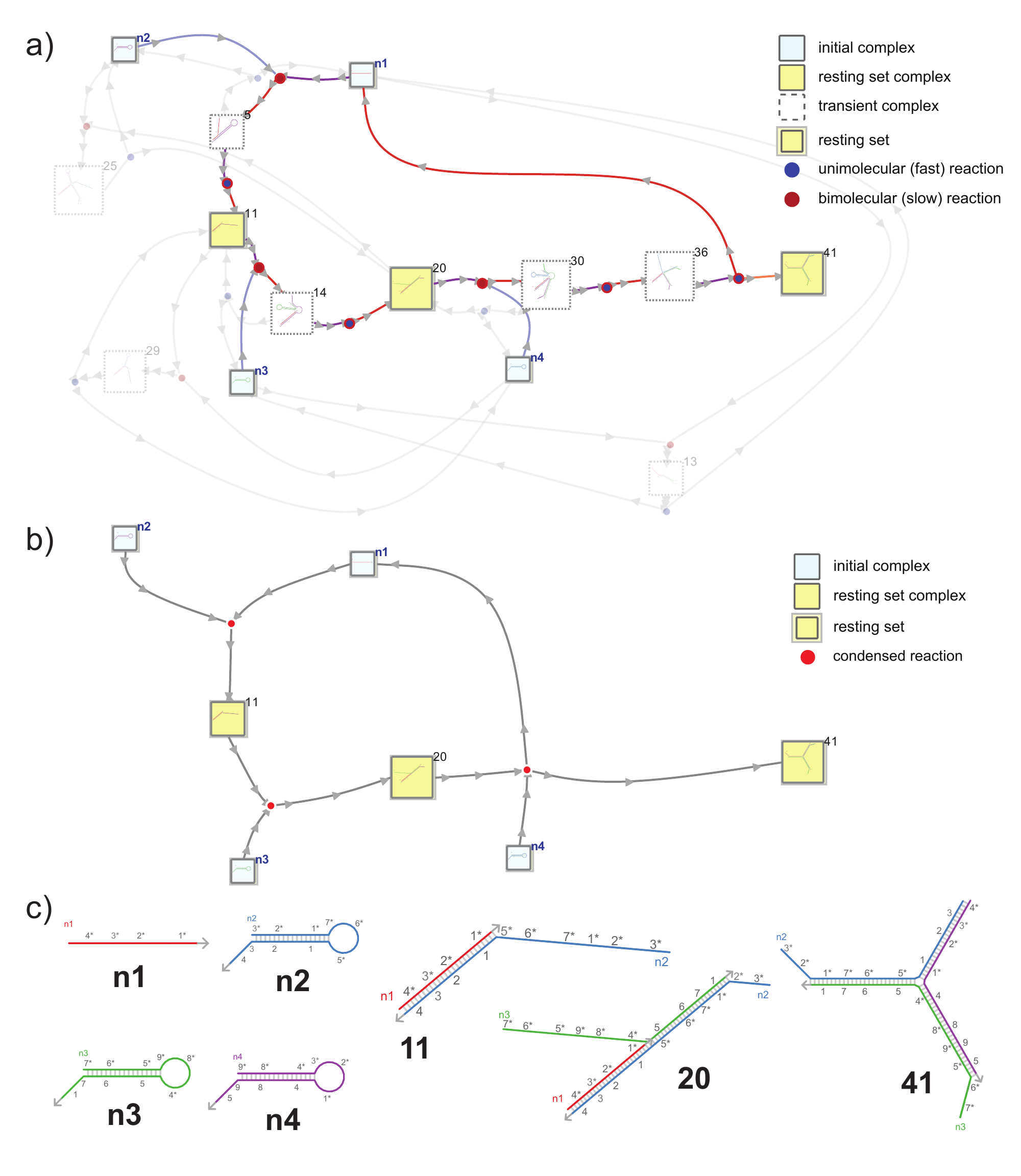} \vspace{-10mm}
	\caption[Catalytic 3-arm junction]{Catalytic 3-arm junction formation. We show the catalytic 3-arm junction assembly process (originally demonstrated by Yin et al. \cite{Yin:2008gb}), as enumerated by our software; the intended pathway is highlighted in red and purple. Boxes with rounded edges represent complexes---initial complexes have blue backgrounds, \term{transient} complexes have dashed borders (e.g. complex \cplx{5}), and \term{resting complexes} have yellow backgrounds. Circular nodes are reactions---\term{unimolecular} (blue) and \term{bimolecular} (red) reactions. \term{Resting sets} are shown as light yellow boxes around complexes (in this case, each resting set has only one complex). 
	\subfigure{a} Detailed reactions. A set of three strands (\cplx{n2}, \cplx{n3}, \cplx{n4}) begin as metastable hairpin monomers. A single-stranded initiator (\cplx{n1}) induces toehold attachment (\cplx{5}) and opening of the first hairpin, exposing a sequestered toehold (\cplx{11}). The opened hairpin-initiator complex can bind the second hairpin (\cplx{n3}) in a similar process, and so on for the third hairpin (\cplx{n4}). Finally, remote toehold-mediated branch migration allows \cplx{n4} to displace the initiator \cplx{n1}, allowing for multiple turnover.
	Greyed-out reactions and complexes indicate non-productive binding.
	\subfigure{b} Condensed reaction graph, showing only reactions between \term{resting sets.} 
	\subfigure{c} Key complexes are shown in detail. }
	\label{fig:catalytic-3arm}
\end{figure}

% ====================================================================

\section{Reaction Enumeration Model}

% --------------------------------------------------------------------

\subsection{Primitives and Definitions}\label{sec:primitives-and-definitions}

We begin by defining some terms necessary for a discussion of the reaction enumeration algorithm and process. These terms are also demonstrated in \fig{\ref{fig:catalytic-3arm}}.

\begin{defn}
	A \emph{\segment} $d = (z, \sigma)$ is a tuple where $z$ is the name of the
	\segment and $\sigma$ is the (optional) string of nucleotides from the alphabet
	$\Sigma = \set{A, C, T, G}$. The length of a \segment $d$ is written as
	$|d|$ and represents the number of nucleotides in the sequence $\sigma$. 
	The enumerator semantics depend only on $|d|$ and not on the actual sequence,
	thus allowing the domain-level dynamics to be specified prior to sequence design.  A
	\segment $d = (z, \sigma)$ has a \emph{complementary} \segment $d^*$ whose name
	is generally $z^*$ and whose sequence is the Watson-Crick complement
	of $s$. 
\end{defn}
\begin{defn}
	A \emph{strand} $s = (z, D)$ is a tuple where $z$ is the name of the
	strand and $D$ is a sequence of \segments $\seq{d_1, d_2, \ldots}$, ordered from 
	the 5' end of the strand to the 3' end. Two strands are equal if and only if their 
	names are equal and they contain identical sequences of \segments. 
\end{defn}
\begin{defn}
	A \emph{structure} $T = \seq{T_1, T_2, \ldots T_n}$ (for a sequence of
	$n$ strands) is a sequence of lists of bindings, where each binding is either a pair of integers or $\varnothing$. That is,
	$T_i = \seq{T_{i,1}, T_{i,2}, \ldots}$ is the structure for strand
	$i$. $T_{i,j} = (s_{i,j}, d_{i,j})$ is a tuple indicating that \segment
	number \emph{j} within strand number \emph{i} is bound to \segment number $d_{i,j}$ in strand
	number $s_{i,j}$. This encoding is redundant because
	$T_{i,j} = (i', j') \iff T_{i', j'} = (i, j)$. If a \segment is unbound,
	$T_{i,j} = \varnothing$.
\end{defn}
\begin{defn}
	A \emph{complex} $c = (z, S, T)$ is a tuple where $z$ is the name, $S$
	is the sequence of strands in the complex, and $T$ is the structure of
	the complex. 

	\begin{itemize}
	
	\item
		A complex can be \emph{rotated} by circularly permuting the elements
		of $S$ and $T$ (e.g. $S' = \seq{S_n, S_1, S_2, \ldots, S_{n-1}}$)
		and setting $s_{i,j} = 1 + ((s_{i,j} + n - 1) \mod n)$ for all $s_{i,j} \in T_i$ for all $T_i \in T$). Note that for a non-pseudoknotted complex, the sequence of strands has a unique circular ordering in which the base pairs are properly nested, but there are multiple equivalent circular permutations of the complex\cite{Bois:2007vp}. Therefore we say the
		resulting complex is in \emph{canonical form} if the
		lexicographically (by the strand name) lowest strand $S_\ell$ appears first (e.g.
		$S_1 = S_\ell$). A complex may be rotated repeatedly in order to
		achieve canonical form. 
		Two complexes in canonical form are equal if they contain the same
		strands in the same order, and their structures are equal.
	\item
		A complex must be \term{connected}, which means that all strands in the 
		complex are connected to one another. Two strands $S_1$ and $S_2$ are \term{connected} 
		if either $S_1$ is bound to $S_2$ or $S_1$ is bound to some other strand 
		that is connected to $S_2$. 
	\end{itemize}
\end{defn}
\begin{defn}
	A \emph{reaction} $r = (A, B)$ is a tuple where $A$ is the multiset of
	reactants and $B$ is the multiset of products, where reactants and products are both complexes. 
	We write $\rate{r}$ to represent the rate of $r$.

	\begin{itemize}
	\item
		The \emph{arity} $\alpha(r)$ of a reaction $r$ is a pair
		$(|A|, |B|)$, where $|A|$ counts the total number of elements in $A$, 
		and similarly for $|B|$. Any reaction with arity $(1, n)$ is
		\emph{unimolecular}; reactions with arity $(2, n)$ are
		\emph{bimolecular}, and those with other arities are \emph{higher
		order}. For the sake of this paper, we do not consider $(0, n)$ or $(n,0)$ reactions
		(those which birth new products from no reactants or cause all reactants to disappear).
	\item
		A reaction may be classified as \emph{fast} or \emph{slow}; we make
		this separation based on the arity of the reaction (unimolecular
		reactions are fast, while bimolecular and higher-order reactions are slow). This
		assumption is rooted in the Law of Mass Action for chemical kinetics.  
		For a unimolecular reaction $A \to B$,
		the rate $\ratex_\text{uni}$ of the reaction is given by $\ratex_\text{uni} = k_\text{uni} [A]$
		where $[A]$ is the concentration of species $A$, and $k_\text{uni}$ is a rate constant.
		For a bimolecular reaction $A + B \to C$, the rate $\ratex_\text{bi}$ of the reaction is given by 
		$\ratex_\text{bi} = k_\text{bi} [A][B]$, where $[A]$ is the concentration of species $A$, $[B]$ is the concentration of $B$, 
		and $k_\text{bi}$ is a rate constant.
		
		This means that, as the concentration of all species decreases, the rates of 
		bimolecular reactions decreases more quickly than the rates of unimolecular reactions.
		Intuitively, this is because bimolecular reactions require the increasingly improbable 
		collision of two species (rather than one species simply achieving sufficient transition
		energy). 
		Therefore, the assumption of a ``clean'' separation of timescales (such that \emph{all} unimolecular reactions
		occur at a faster rate than any bimolecular reactions) is only valid at sufficiently low concentrations.

	\item
		For a set $R$  of reactions, we sometimes write $R_f$ to represent the fast reactions and $R_s$ to
		represent the slow reactions, such that $R = R_f \cup R_s$. Finally, it will be sometimes useful to 
		partition  $R_f$ into $(1,1)$ and $(1,n)$ reactions, such that $R_f = R_f^{(1,1)} \cup R_f^{(1,n)}$, 
		where by convention $n>1$.
	\end{itemize}
\end{defn}
\begin{defn}
	A \emph{resting set} $Q$ is a set of one or more
	complexes---strongly-connected by fast (1,1) reactions---with no
	\emph{outgoing} fast reactions of any arity. That is, fast reactions can
	interconvert between any of the complexes within $Q$, but no fast
	reaction is possible that transforms a complex in $Q$ to any complex
	outside of $Q$. We write this as:
	$\forall r = (A, B) \in R_f : A \subseteq Q \rightarrow B \subseteq Q$.

	\begin{itemize}
	
	\item
		Complexes which belong to resting sets are called \emph{resting complexes}; all other complexes are
		\emph{transient complexes}.
	\end{itemize}
\end{defn}
\begin{defn}
	A \emph{reaction network} is a pair $G = (C, R)$ where $C$ is a set of
	species (either complexes or resting sets) and $R$ is 
	a set of reactions between those species.

	\begin{itemize}
	\item
		When $C$ is a set of complexes, we refer to $G$ as a \term{detailed reaction network.} In \sect{\ref{sec:condensing-reactions}} we discuss 
		reaction networks where $C$ is a set of resting sets; these are \term{condensed reaction networks.} 
	\item
		Although we may informally refer to a reaction network as a ``reaction graph,'' 
		the presence of bimolecular reactions means that most reaction networks 
		\emph{cannot} be simply represented as a graph whose vertices are complexes and
		whose reactions are edges. 

		If $R$ contains only $(1,1)$ reactions, than we can consider $G$ to
		be a directed graph, where $C$ are the vertices, and each reaction
		$r = (\set{a}, \set{b}) \in R$ is an edge connecting the reactant
		complex $a$ to the product complex $b$. We use this representation 
		to define a notion of resting sets (above), and to provide a scheme for 
		condensing a reaction network.

		Alternatively, a reaction network may be represented as bipartite graph $G' = (V, E)$, where the 
		set of vertices $V = C \cup R$, and the edges in $E$ are pairs $(c, r)$ or $(r, c)$ for some $r \in R$, 
		$c \in C$. For instance, this would allow a reaction $r$ such as:
		$$X + Y \to Z$$
		to be represented by the graph $G = (\set{X, Y, Z} \cup \set{r}, \set{(X, r), (Y, r), (r, Z)})$. We use this 
		representation only when visually depicting reaction networks, as in \fig{\ref{fig:catalytic-3arm}}. 
		In all other cases in this paper, the graphs we refer to will be unipartite where the vertices are species.
	\end{itemize}
\end{defn}
\begin{defn}
	A \term{reaction enumerator} is therefore a function which maps a set of complexes $\li{C}$ to a set of reactions $\li{R}$ between those complexes. Therefore application of a reaction enumerator to a set of complexes yields a reaction network.
\end{defn}

We note briefly that these notions are related to other models of parallel computing, particularly Petri nets \cite{Cook:2009kp,Goss:1998tf}. Our reaction networks could also be considered Petri nets, where the species (either complexes---for a detailed reaction network, or resting sets---for a condensed reaction network) are \term{places} and the reactions are \term{transitions.} In a test tube, there would be a finite number of molecules for each individual species---in the language of Petri nets, molecules are called \term{tokens}, and a \term{marking} is an assignment of token counts (molecule numbers) to each place (species).

We emphasize that the ``enumeration'' task at hand is determination of the CRN/net itself---we are \emph{not} enumerating possible molecule counts/markings of the net. Rather, we assume a finite initial set of complexes and---given some rules about how those complexes may interact---attempt to list all possible reactions that may occur. In the next section, we describe these interaction rules.

% --------------------------------------------------------------------

\subsection{Reaction types}\label{sec:reaction-types}

The enumerator relies on the definition of several \term{move functions}---a move function takes a complex or set of complexes and
generates all possible reactions of a given type. In principle, many reaction types are possible; however, we restrict ourselves to the following four basic types, further
classifying each type by arity. Here we provide the move functions for each of these reaction types. The different reaction types are summarized in \fig{\ref{fig:rxn-types}}.

Note: when clear from context which complex is under consideration, we will sometimes write $d_{i,j}$ to refer to \segment $d_j$ on strand $s_i$ in this complex. Similarly, if $X = (i,j)$, we may write $d_X$ to refer to $d_{i,j}$. 

\begin{description}

\item[Binding] -- two complementary, unpaired \segments hybridize to form a new
	complex (\fig{\ref{fig:rxn-types}a}).

	\begin{itemize}
	
	\item
		(1,1) binding -- binding occurs between two \segments in the same
		complex. These reactions can be discovered by traversing each \segment on each strand and looking ahead for higher-indexed \segments that are complementary (within a multiloop, skipping over stems that lead to branched structures, in order to avoid generating pseudoknots); a separate reaction is generated for each separate complementary pair. See \alg{\ref{alg:bind11}} for pseudocode.

	\item
		(2,1) binding -- binding occurs between two \segments on different
		complexes. These reactions can be found by comparing each \segment in one complex $c$ to each \segment in another complex $c'$, generating a separate reaction to join each complementary pair that is found. See \alg{\ref{alg:bind21}} in the apppendix for pseudocode.

	\end{itemize}
\item[Opening] -- two paired \segments in a complex detach (\fig{\ref{fig:rxn-types}b}). We assume that there is some threshold length $L$, such that duplexes less than or equal to $L$ nucleotides long can detach rapidly (in unimolecular fast reactions), while \segments longer than $L$ bind irreversibly. $L$ is a parameter that can be set by the user; the default value is $L = 8$.

	These reactions can be found by examining each \segment $d_j$ on a complex that is bound to a higher-indexed \segment, looking to the left and right of that \segment (along the same strand) to determine the total length $\ell$ in nucleotides of the helix containing $d_j$. If $\ell < L$, a new reaction is generated that detaches \emph{all} bound \segments in the helix. See \alg{\ref{alg:open}} in the appendix for pseudocode.

	\begin{itemize}
	
	\item
		(1,1) opening -- two paired \segments in a complex detach, but
		the complex remains connected

	\item
		(1,2) opening -- two paired \segments in a complex detach, and
		the complex dissociates into two complexes
	\end{itemize}
\item[3-way] -- an unpaired \segment replaces one \segment in a nearby duplex
	(branch migration) (\fig{\ref{fig:rxn-types}c}). These reactions can be discovered as follows: for each bound \segment with an adjacent unbound \segment on each strand (the invading \segment), look first to the left and then to the right of that \segment---skipping over internal loops---for a third bound \segment (the displaced \segment) that is complementary to the invading \segment. If the displaced \segment is directly adjacent to the bound \segment (on the same strand), then this is \term{\emph{direct} toehold-mediated branch migration}; otherwise (e.g. if the bound \segment and the displaced \segment are separated by an internal loop), this is known as \term{\emph{remote} toehold-mediated branch migration.}\cite{Genot:2011gt}\footnote{An example can be seen in \fig{\ref{fig:catalytic-3arm}a}, in the reaction $\cplx{36} \to \cplx{41} + \cplx{n1}$} See \alg{\ref{alg:3way}} in the appendix for pseudocode.

	\begin{itemize}
	
	\item
		(1,1) 3-way -- the branch migration results in a complex which
		remains connected
	\item
		(1,2) 3-way -- the branch migration releases a complex
	\end{itemize}
\item[4-way] -- a four-arm junction is re-arranged such that hybridization is
	exchanged between several strands (\fig{\ref{fig:rxn-types}d}). See \alg{\ref{alg:4way}} in the appendix for pseudocode.

	\begin{itemize}
	\item
		(1,1) 4-way -- the branch migration results in a complex which
		remains connected
	\item
		(1,2) 4-way -- the branch migration releases a complex
	\end{itemize}
\end{description}

\begin{figure}[t]
\centering
\includegraphics[width=\textwidth]{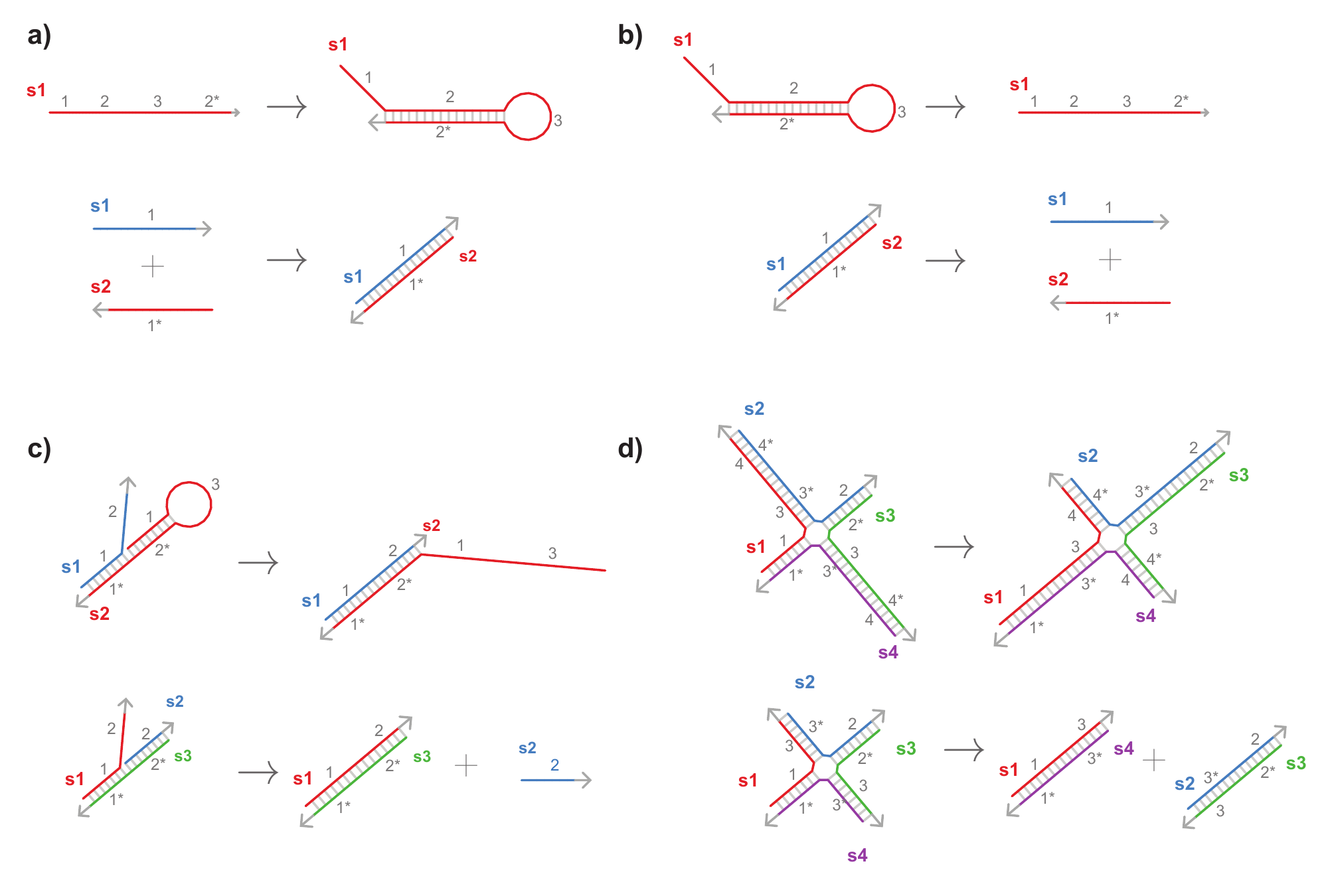} \vspace{-12mm}
\caption{Available reaction types. 
\subfigure{a} Bind (1,1) and (2,1). 
\subfigure{b} Open (1,1) and (1,2). 
\subfigure{c} 3-way branch migration (1,1) and (1,2). 
\subfigure{d} 4-way branch migration (1,1) and (1,2)}
\label{fig:rxn-types}
\end{figure}

% --------------------------------------------------------------------

\subsection{Separation of timescales}\label{sec:fast-and-slow-reactions}

Reactions are classified as \emph{fast} or \emph{slow} according to
their arity; as discussed above ($1,n$) reactions are fast for all $n\ge0$, while all other reactions
are slow. Complexes are partitioned into
\term{transient complexes} and \term{resting complexes.} Resting complexes are those complexes that are part of a \term{resting set}, while transient complexes are all other complexes. resting sets are calculated as follows: consider a reaction network $C, R$; we can construct a graph $G = (C, R_f^{(1,1)})$ where $R_f^{(1,1)}$ is the set of fast $(1,1)$ reactions in the reaction network. We compute the set $W$ of \term{strongly-connected components} of $G$; Each component $W_i \in W$ is a resting set if and only if there are no outgoing fast reactions for any complex in $W_i$. That is, for some resting set $Q \in W$, any fast reaction consuming a complex in $Q$ produces only products that are in $Q$. 

Intuitively, resting sets are sets of complexes that can interconvert rapidly between one another, whereas transient complexes are those which (via only fast reactions) rapidly react to become different complexes (eventually resting complexes). Given infinite time, all molecules that begin in transient complexes will eventually reach a resting set, and all complexes within a resting set will be visited infinitely often by any molecule that remains in the resting set. 

Our assumption of a separation of timescales allows the following
simplification to be made---transient complexes are unlikely to undergo
slower bimolecular (or higher-order) reactions, since fast reactions
rapidly transform them to other complexes. We therefore assume that only resting complexes can participate in slow reactions. This simplification avoids consideration of a large number of highly unlikely reaction pathways and intermediate complexes. For instance, it significantly reduces the problem of potentially infinite polymerization, as shown in \fig{\ref{fig:polymer}b}. Similarly, consider the example of the three-arm junction (\fig{\ref{fig:catalytic-3arm}}). Before the junction closes, displacing the initiator, it is possible for another hairpin to bind the exposed third-arm, catalyzing the formation of a fourth arm; this process could repeat, creating an implausibly large polymeric structure. We recognize that, in the limit of low concentrations, the bimolecular association will be much slower than the unimolecular branch migration. By classifying the intermediate complex as transient (to be quickly resolved by the formation of a completed 3-arm junction), we prohibit the bimolecular association reaction that could result in the polymer. 
Additionally, since the enumeration of unimolecular reactions is linear in the number of species, while
enumeration of bimolecular reactions is quadratic, eliminating the consideration of bimolecular reactions between transient complexes effectively reduces the complexity of the enumeration problem.

It should be noted that the set of reactions presented in our model still does not describe all
possible behaviors of DNA strands---notably, ``partial'' binding between
\segments is not possible, and none of the reactions
produces complexes that are \emph{pseudoknotted}. Additional details are provided in the appendix.

% ====================================================================

\section{Reaction Enumeration Algorithm} \label{sec:reaction-enumeration-algorithm}

The essence of the reaction enumeration algorithm is as follows:
\begin{enumerate}[itemsep=0mm]
\item Exhaustively enumerate complexes reachable from the initial complexes by fast reactions.
\item Classify those complexes into transient complexes or resting complexes (by calculating the SCCs in terms of fast reactions and marking those SCCs with no outgoing fast reactions resting sets).
\item Enumerate slow (bimolecular) reactions between resting complexes.
\item Repeat this process for any new complexes generated by slow (bimolecular) reactions.
\end{enumerate}

Additional details of the algorithm are provided in \sect{\ref{sec:reaction-enumeration-algorithm-details}}.

% ====================================================================

\section{Condensing reactions}\label{sec:condensing-reactions}

\newcommand{\FT}[0]{\mathbb{F}}
\newcommand{\RFT}[0]{\mathbb{R}}
\newcommand{\SC}[0]{\mathbb{S}}
\newcommand{\repr}{\sim}

It may not always be desirable or efficient to view---or even to simulate---the entire reaction
space, including all transient complexes and individual
binding/unbinding events. For this reason, we provide a mechanism of
systematically summarizing, or \term{condensing} the full network of
complexes and reactions. In this condensed representation, we represent
\term{transitions} between resting sets as a set of \term{condensed reactions,} where the reactants and products of condensed reactions are resting sets, rather than individual complexes. 
Recall that a resting set is a set of one or more complexes, connected by
fast reactions (and with no outgoing fast reactions). All of these condensed reactions will be bimolecular. 
Intuitively, if we assume that fast reactions are arbitrarily fast, then looking only at the resting sets, the condensed reactions will behave exactly the same as the full reaction network. We formalize the correspondance between trajectories in the detailed and the condensed reaction networks in \sect{\ref{sec:justification-condensed}}.

We define the \term{condensed reaction network} to be the 
reaction network $\hat{G} = (\li{Q}, \hat{\li{R}})$, where $\li{Q}$ is the set of \emph{resting sets} and $\hat{\li{R}}$ is the set of condensed
reactions. For clarity, we will refer to the ``full'', non-condensed reaction network $G = (\li{C}, \li{R})$ as the \term{\emph{detailed} reaction network}.
The goal of this section, therefore, is to describe how to calculate the condensed reaction network $\hat{G}$ from the detailed reaction network $G$. We later prove some properties of this transformation---namely that transitions are preserved between the two reaction networks in \sect{\ref{sec:justification-condensed}}.

Since much of the following discussion will involve both sets
(which may contain only one instance of each member), and multisets
(which may contain many instances of any element---for example, the
reactants and products of a chemical reaction are each multisets), we
will use blackboard-bold braces $\mset{~}$ to represent multisets and normal
braces $\set{~}$ to represent sets.  
Let us also define a useful operation on sets of multisets. Let $A$ and $B$ be sets of multisets; then we will define the \term{Cartesian sum} of $A$ and $B$ to be $A \oplus B = \set{a + b : a \in A, b \in B}$. \footnote{This operation is equivalent to taking the Cartesian product $A \times B = \set{(a, b) : a \in A, b \in B}$, then summing each of the individual pairs and giving a set containing all the sums. That is, $A \oplus B = \set{ \sum_{x \in p} x : p \in A \times B }$. The result is therefore a \emph{set of multisets}}. It is easily shown that the Cartesian sum is associative and commutative. We will therefore also sometimes write $\bigoplus_{b_i \in B} b_i$ to represent $b_1 \oplus b_2 \oplus \ldots$ for all $b_i \in B$.

We will attempt to present this section as a self-contained theory about condensed reaction networks that will be independent of the particular details of the detailed reaction enumeration algorithm. However, we will make certain restrictions on the detailed reaction network to which this algorithm is applied. The reader can verify that the reaction enumeration algorithm presented above satisfies these properties, even when the enumeration terminates prematurely (as in \sect{\ref{sec:terminating-conditions}}):
\begin{enumerate}[itemsep=0mm]
	\item{All fast reactions are unimolecular, and all slow reactions are bimolecular or higher order.}
	\item{Reactions can have any arity $(n,m)$, as long as $0 < n \le 2$ and $m > 0$.}
	\item{For any sequence of unimolecular reactions, where each reaction consumes a product of the previous reaction and the last reaction produces the original species, the sequence must consist only of 1-1 reactions.\footnote{Consider a pathological reaction network such as $a \to b + c$, $c \to a$ (such a reaction network would not be generated by our enumerator, because the number of DNA strands are conserved across reactions; this network would also \emph{not} satisfy property 3). These types of reactions prevent us from finding meaningful SCCs. }}
	\item{Reactants of non-unimolecular reactions must be resting complexes.}
\end{enumerate}

\fig{\ref{fig:condense-example}} provides an example of the reaction condensation process. Before we explain this process, let us clarify a few properties of the detailed reaction network, based on our definitions. Considering the detailed reaction network $G = (\li{C}, \li{R}_f \cup \li{R}_s)$, let $\li{R}_f^{(1,1)}$ be the fast (1,1) reactions in $\li{R}$. First note that the complexes in $\li{C}$ and the fast reactions $\li{R}_f^{(1,1)}$ form a directed graph $\Gamma = (\li{C},\li{R}_f^{(1,1)})$. We will be calculating the strongly-connected components (SCCs) of this graph using Tarjan's algorithm, as we did in the reaction enumeration algorithm. Note that \emph{every} complex (whether a resting complex or a transient complex) is a member of \emph{some} SCC of $\Gamma$. Let us denote by $\SC(x)$ the SCC of $\Gamma$ containing some complex $x$. The resting sets of $G$ are the SCCs for which there are no outgoing fast reactions of any arity. There may be (even large) SCCs that only contain transient complexes, because there is at least one outgoing fast reaction from the SCC; for instance, in \fig{\ref{fig:condense-example}}, $\set{1, 2}$ is an SCC, but is not a resting set (because of the fast reaction $2 \to 3$). Also note that many SCCs may contain only one element (for instance, $\set{3}$ is an SCC). Finally, observe that we can form another graph, $\Gamma'$, where the nodes are SCCs of $\Gamma$, and there is a directed edge between SCCs if there is a $(1,m)$ reaction with a reactant in one SCC and a product in the other. That is, $\Gamma' = (S, E)$ where $E = \set{ S(a), S(b_i) \forall b_i \in B : r = (\mset{a}, B) \in R_o \wedge \alpha(r) = (1,m) }$. $\Gamma'$ is a directed acyclic graph (since all cycles were captured by the SCCs).\footnote{It is important to note that $\Gamma'$ is not suitably described as a quotient graph on a transitive closure of fast (1,1) reactions, since $(1,m)$ reactions can also yield edges.}
 The leaves of $\Gamma'$ (those vertices with no outgoing edges) are the resting sets of $G$.

\begin{defn}
A \term{fate} $F$ of a complex $x$ is a multiset of
possible resting sets, reachable from $x$ by fast reactions. 
\end{defn}
$F$ is a multiset because, for instance, $x$ may be a dimer that can decompose
into two identical complexes, both of which are resting complexes,
such that: $x \rightarrow y + y$. Any complex $x$ may have many such
fates\footnote{Consider, for instance, the complex \cplx{17} in \fig{\ref{fig:condense-example}}; in this case, there are two possible reactions resulting in different combinations of resting sets; therefore \cplx{17} has two possible fates---either $\mset{ \set{12}, \set{13} }$ or $\mset{ \set{21}, \set{20,26} }$}, and all complexes must have at least one fate. We will denote the
set of such fates by $\FT(x)$. Computing $\FT(x)$ for all complexes $x$ in
the ensemble therefore allows each complex to be mapped to a set of
possible resting set fates. More generally, we may think about the fate 
of a multiset of complexes; this answers the question ``to what combinations
of resting sets can some molecules evolve, starting in this multiset of complexes, exclusively via 
fast reactions?'' Let us define $\FT(X)$ where $X = \mset{x_1, x_2, \ldots}$ to be:
$$\FT(X) = \bigoplus_{x \in X} \FT(x) = \FT(x_1) \oplus \FT(x_2) \oplus \ldots$$
The intuition is that, since fates for different complexes are independent, 
the set of possible fates of $X$ is the set of all possible combinations of the 
fates of $x_1$, $x_2$, etc.
From here, we can define the related notion for some reaction $r = (A, B)$, where $B = \mset{ b_1, b_2, \ldots b_n }$. 
Let $\RFT(r) = \FT(B)$, such that:
$$\RFT(r) = \FT(B) = \bigoplus_{b \in B} \FT(b) = \FT(b_1) \oplus \FT(b_2) \oplus \ldots \oplus \FT(b_n). $$ 
Finally, let $R_o(S)$ be the set of reactions leaving some SCC $S$ of $\Gamma$. 

With these definition, we can provide an expression for $\FT(x)$ in terms of a recursion:
\begin{equation}\label{eqn:fate-recursion}
\FT(x) = \begin{cases} 
\set{ \SC(x) } & \text{if } \SC(x) \text{is a resting set} \\
\bigcup_{r \in R_o\left(\SC(X)\right)} \RFT(r) & \text{otherwise} \\
\end{cases}
\end{equation}
The ``base case'' is that, if $x$ is a resting complex, then $\FT(x)$ is just $x$'s resting set. The recursive case can be evaluated in finite time, because $\Gamma'$ is a directed acyclic graph. That is, if we start with some arbitrary transient complex $x$, the recursion can be evaluated by a depth-first traversal of $\Gamma'$, starting from $x$; since $\Gamma'$ is acyclic, each branch of the depth-first traversal will terminate at a leaf of $\Gamma'$---a resting complex for which $\FT(x)$ is trivial.

Once we have computed $\FT(x)$, we can easily calculate the set of
condensed reactions. Since $\FT(x)$ captures all of the information about
the fast reactions in which $x$ participates, we must only consider slow
reactions. For each slow reaction $s = (A, B)$ in the detailed reaction network, where $A$ is the
multiset of reactant complexes and $B$ is the multiset of product complexes, 
we will generate some number of condensed reactions---one for each fate of $B$.
Specifically, the condensed reaction network $\hat{G} = ( \hat{C}, \hat{R} )$ has 
$\hat{C}$ being the set of resting sets; we build $\hat{R}$ as follows: 
for each slow reaction $s = (A,B) \in R$, 
with $\SC(A) = \mset{ \SC(a_i) : a_i \in A }$, then for each $F \in F(s)$, we add a condensed reaction $(\SC(A), F)$ to $\hat{R}$. 
In \alg{\ref{alg:rxn-condense}} we describe precisely how to calculate $\FT(x)$, and then how to compute the condensed reactions from $\FT(x)$. In \sect{\ref{sec:justification-condensed}} we present a set of theorems justifying the choice of this algorithm.

\begin{figure}
	\includegraphics{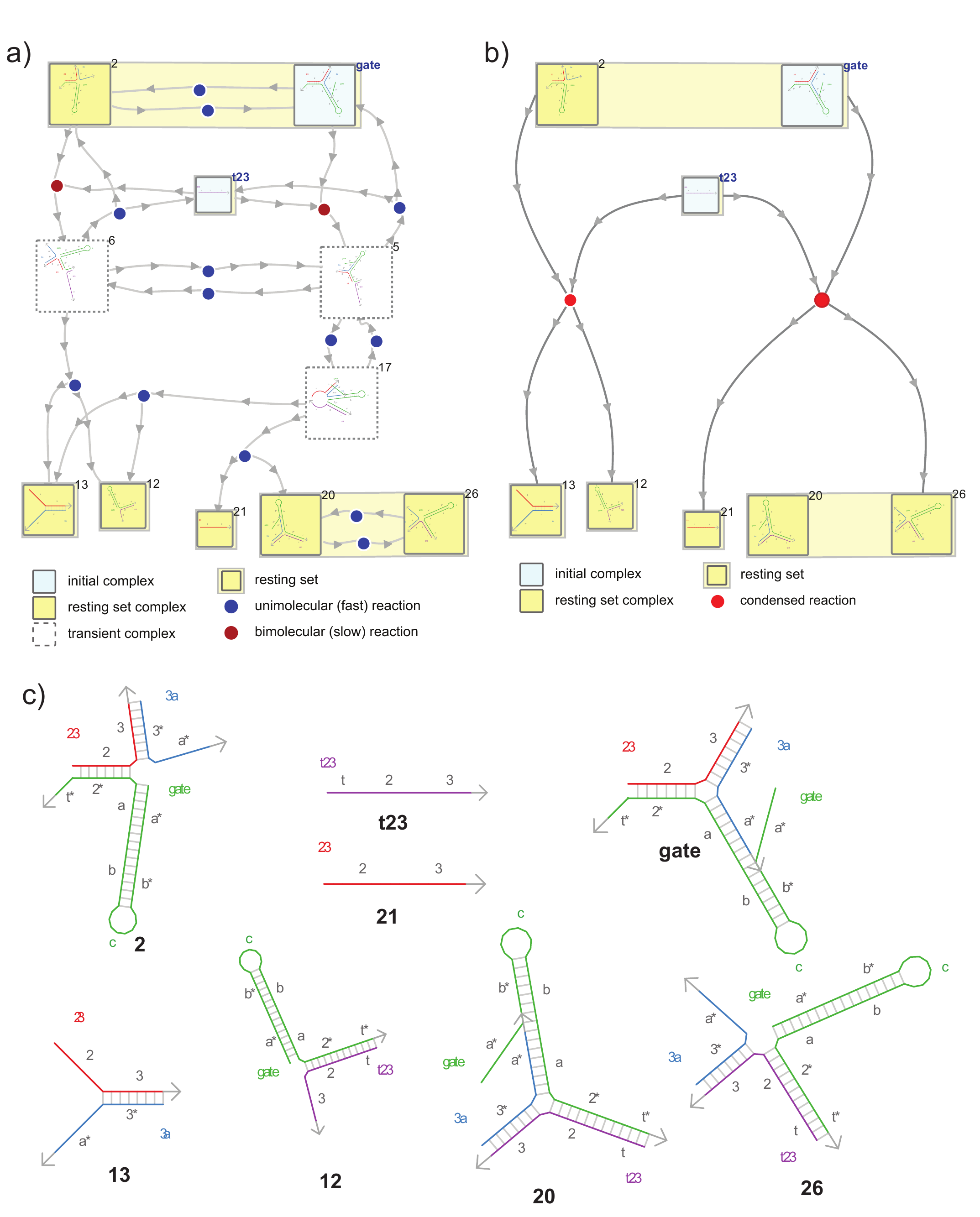} \vspace{-5mm}
	\caption[Reaction Condensation Example]{Reaction Condensation Example. In this example, we attempt to illustrate possible complexities in condensing reactions. 
	\subfigure{a} Detailed reaction network. Beginning with complex \cplx{gate}, rapid interconversion with \cplx{2} is possible. Depending on the state of the complex, interaction with \cplx{t23} yields a variety of breakdown products. 
	\subfigure{b} Condensed reaction network. The condensed reactions show two possible bimolecular reactions between \cplx{gate}/\cplx{2} and \cplx{t23}, and that each of these reactions has multiple breakdown products. Although there are two parallel pathways that yield complexes \cplx{12} and \cplx{13} in the detailed reaction network (via \cplx{6} or via \cplx{5} and \cplx{17}), only a single reaction ($\cplx{gate} \to \cplx{13} + \cplx{12}$) appears in the condensed reactions, because the two transient pathways have indistinguishable fates. 
	\subfigure{c} Key complexes shown in detail.}
	\label{fig:condense-example}
\end{figure}

% ====================================================================

\section{Approximate Kinetics}

Thus far we have only discussed the kinetics of the system in the context of our assumption that the timescales for ``fast'' and ``slow'' reactions can be separated. We have not addressed the fact that, even within the sets of fast and slow reactions, different reactions may occur at very different rates. Different rates of different reactions can greatly impact the dynamics of the system's behavior. Therefore in order to make useful predictions about the experimental kinetics of the reaction networks we are studying, we need to consider the \emph{rates} of these reactions, in addition to simply enumerating the reactions themselves. 

In this section, we present a method for approximating the rate laws governing the rates of each of the \term{move types} discussed in \sect{\ref{sec:reaction-types}}, along with an extension for calculating the rates of condensed reactions. Our implementation automatically calculates these rates and includes them in its output, allowing the user to import the resulting model into a kinetic simulator such as COPASI, provide initial concentrations for species, and simulate the resulting dynamics. It is important to emphasize that the rate laws that we provide (in particular, the formulas we give for the rate constants), although based on experimental evidence and intuition, are heuristic and approximate. Also note that the kinetics of a real physical system will be affected by parameters outside the consideration of our model---for instance, the sequences of each \segment (and therefore the binding energies) will be different, and the temperature and salt concentrations can both change the energetics of the interactions. These changes can greatly affect the real system's overall dynamic behavior. The formulas we give here are intended to roughly approximate the rate constants at \SI{25}{\degreeCelsius} and \SI{10}{\milli\Molar} Mg\textsuperscript{2+}.

Let $\rate{r}$ be the rate of some reaction $r = (A, B)$, where $A = {a_1, a_2, \ldots}$. For the sake of this section, we will assume that $\rate{r}$ is given by:
$$\rate{r} = k \prod_{a \in A} [a]$$
where $[a]$ represents the concentration of some species $a$, and $k$ a constant, which we call the \term{rate constant}. Implicit in this choice of rate law is the assumption that all reactions are elementary (meaning there is only a single transition state between the reactants and the products); in reality, this is not the case, as most DNA strand interactions have a complex transition state landscape and involve many intermediate states. Since the enumerator has no knowledge of the concentrations of any species, the problem of estimating the rate $\rate{r}$ of the reaction boils down to estimating the rate constant $k$. We will therefore attempt to provide formulas for approximating these rate constants, such that the overall reaction rates are consistent with experimental evidence in a reasonable concentration regime (sufficiently below micromolar concentrations).

In \sect{\ref{sec:detailed-rxn-kinetics}}, we provide approximate formulas for the rate constant $k$, for each of the reaction types described above. 

\subsection{Approximate condensed reaction kinetics}

\newcommand{\pdecay}[2]{\Pr{#1 \to #2}}
\newcommand{\prepr}[2]{\Pr{#1 : #2}}

Consider some reaction $\hat{r} = (\hat{A}, \hat{B})$ where $\hat{A} = \mset{\hat{A}_1, \hat{A}_2, \ldots}$ and $\hat{B}$ are multisets of resting sets. Assume $\hat{r}$ is part of a condensed reaction network $(\li{\hat{C}}, \li{\hat{R}})$, and that there is a detailed reaction network $(\li{C}, \li{R})$. Let $R_{\hat{A}}$ be the set of all detailed bimolecular reactions in $\li{R}$ between representations of $\hat{A}$:
$$R_{\hat{A}} = \set{ r = (A', B') : r \in \li{R}, A' \text{ is a representation of } \hat{A} }.$$
For bimolecular reactions (where $\hat{A} = (\hat{A}_1, \hat{A}_2)$), then $R_{\hat{A}}$ is given by:
$$R_{\hat{A}} = \set{ r = (\mset{A'_1, A'_2}, B') : r \in \li{R}, A'_1 \in \hat{A}_1, A'_2 \in \hat{A}_2 }.$$

The rate law $\rate{\hat{r}}$ for the condensed reaction $\hat{r}$ will be given by:
$$\rate{\hat{r}} = k \prod_{\hat{A}_i \in \hat{A}} [\hat{A}_i]$$
where $[\hat{A}_i]$ represents the concentration of the resting set $\hat{A}_i$, which we can assume to be in equilibrium relative to the fast reactions. We can consider $[\hat{A}_i]$ to be the sum over the concentrations of the species in $\hat{A}_i$:
$$[\hat{A}_i] = \sum_{a \in \hat{A}_i} [a].$$
Again, the real challenge will be predicting the rate constant $k$.

For simplicity, we will consider only the bimolecular case, where $\hat{A} = (\hat{A}_1, \hat{A}_2)$ (the generalization to higher order reactions is simple). The approximate rate constant $k$ is given by
\begin{equation} \label{eqn:condensed-k}
k = \sum_{r = (\mset{a_1, a_2}, B) \in R_{\hat{A}}} \Pr{a_1 : \hat{A}_1} \cdot \Pr{a_2 : \hat{A}_2} \cdot k_r \cdot \Pr{T_{B \to \hat{B}}}
\end{equation}
where: 
$\Pr{a_1 : \hat{A}_1}$ represents the probability that, at any given time, a single complex in the resting set $\hat{A}_1$ will be $a_1$, 
$\Pr{a_2 : \hat{A}_2}$ represents the probability that, at any given time, a single complex in the resting set $\hat{A}_2$ will be $a_2$, 
$k_r$ represents the rate constant for the \emph{detailed} reaction $r$, as calculated in the previous section, and 
$\Pr{T_{B \to \hat{B}}}$ represents the probability that the complexes in $B$ decay to complexes that represent $\hat{B}$. Note that, if $r$ produces products which can never be converted to the resting sets in $\hat{B}$, then this term will be 0. 

The sum is over all reactions which can represent the reactants---that is, we need to consider all possible reactions that can consume one complex from each resting set in $\hat{A}$. The overall rate for the condensed reaction will be proportional to the rates of those detailed reactions, weighted by the joint probability that those reactants are actually present, and that the products decay to the correct resting set.

To calculate $\Pr{a_1 : \hat{A}_1}$, $\Pr{a_2 : \hat{A}_2}$, and $\Pr{T_{B \to \hat{B}}}$, it is helpful to think of each of the SCCs of the the graph $\Gamma$ of the detailed reaction network as individual, irreducible Continuous Time Markov Chains (CTMCs). We will use the results for resting sets to calculate $\Pr{a_1 : \hat{A}_1}$, $\Pr{a_2 : \hat{A}_2}$, and results for SCCs of transient complexes to derive $\Pr{T_{B \to \hat{B}}}$. 

Consider first the resting sets. We can treat a single instance of the resting set $A = \set{a_1, a_2, \ldots a_L}$ to be a Markov process, periodically transitioning between each of the $L$ states, according to the reactions connecting the various complexes $a_1, a_2, \ldots$ in the resting set. We can describe the dynamics of this process by a matrix $\mat{T} \in \mathbb{R}^{L \times L}$, where the elements $T_{ij}$ of the matrix represent the rate (possibly zero) of the reaction from state $j$ to state $i$, which we denote $\rate{j \to i}$, and each diagonal element is the negative sum of the column:
\begin{equation}
T_{ij} = \begin{cases}
\rate{j \to i}                  & i \ne j \\
-\sum_{j=1, j \ne i}^{L} T_{ji} & i = j   \\ 
\end{cases}
\end{equation}
Let $\vec{s}(t) = (s_1, s_2, \ldots)^T$ be an $L$-dimensional vector giving the probabilities, at time $t$, of being in any of the $L$ states. The continuous-time dynamics of this process obeys
$$\frac{d\vec{s}}{dt} = \mat{T} \vec{s}(t).$$
For a resting set, we assume that the system has reached equilibrium, and so $\vec{s}$ is not changing with time. We therefore find the \term{stationary distribution} $\hat{\vec{s}}$ of this process by setting $d\vec{s}/dt = 0$, and recognizing that $\hat{\vec{s}}$ is the right-eigenvector of $\mat{T}$ with eigenvalue zero. Given the stationary distribution $\hat{\vec{s}} = (\hat{s}_1, \hat{s}_2, \ldots, \hat{s}_L)^T$, we recognize that 
\begin{equation}\label{eqn:prepr-stationary}
\Pr{a_i : A_i} = \hat{s}_i.
\end{equation}

Next consider the transient SCCs. To figure the probability $\Pr{T_{B \to \hat{B}}}$ that complexes in $B$ decay to complexes that represent $\hat{B}$, we will again model the SCC as a Markov process, but we cannot use the stationary distribution since the transient SCC may be far from equilibrium. Note that this SCC does \emph{not} represent a resting set, so there are additionally some $e$ outgoing fast reactions that exit the SCC. We will model this SCC, including the outgoing reactions, as an $(L + e)$-state Markov process, where each of the $e$ states is absorbing. We want to know the probability that, having entered the SCC in some state $i \in \set{1 \ldots L}$, it will leave via some reaction $j \in \set{L+1 \ldots L+e}$. We number the reactions in this way to allow them to be discussed consistently as states in the same Markov process as the complexes.

Assume the SCC is again given by $A = \set{a_1, a_2, \ldots a_L}$. Let $\mat{Q} \in \mathbb{R}^{L \times L}$ be the matrix of transition probabilities \emph{within} the SCC, such that $Q_{ij}$ is the probability that, at a given time the system's next transition is from state $i$ to state $j$, where $i,j \in \set{1 \ldots L}$. Let us therefore define an additional matrix $\mat{E} \in \mathbb{R}^{L \times e}$, where $E_{ij}$ represents the probability that the system in state $i \in \set{1 \ldots L}$ transitions next to absorbing state $j \in \set{L+1 \ldots L_e}$. We define the transition probabilities $Q_{ij}$ and $R_{ij}$ in terms of the transition rates---if $k_{ij}$ is the rate constant of the transition from state $i$ to state $j$, then 
\begin{equation}\label{eqn:transition-prob}
Q_{ij} = \frac{k_{ij}}{\sum_{j' = 1}^{L + e} k_{ij'}}
\end{equation}
and similarly for $R_{ij}$. To calculate the probability of exiting via state $j$ after entering through state $i$, we first define the \term{fundamental matrix}---which gives the expected number of visits to state $j$, starting from state $i$:
\begin{equation}\label{eqn:fundamental-matrix}
\mat{N} = \sum_{k = 0}^{\infty} \mat{Q}^k = (\mat{I}_L - \mat{Q})^{-1}
\end{equation}
where $\mat{I}_L$ is the $L \times L$ identity matrix. Let us define the \term{absorption matrix} $\mat{B} = \mat{N}\mat{R}$; the entries $B_{ij}$ represent the probability of exiting via state $j$ after entering through state $i$.

But what we really need is to calculate the relative probabilities of each of the fates of some complex $x$. Let $\pdecay{x}{F}$ be the probability that $x$ decays into a given fate $F$. 
\begin{equation} \label{eqn:pdecay-complex}
\pdecay{x}{F} = \begin{cases} 
1      & \text{if $\SC(x)$ is a resting set and $\FT(x) = \set{F}$} \\
\sum_j{B_{ij} \pdecay{r_j}{F}} & \text{if $\SC(x)$ is a transient SCC} \\
0      & \text{if $F \not\in \FT(x)$}
\end{cases}
\end{equation}
where $\mat{B} = [B_{ij}]$ is the absorption matrix for $\SC(x)$, $i$ represents the index of complex $x$ in $\SC(x)$, and $j$ is the index of the reaction exits $\SC(x)$, and $\pdecay{r_j}{F}$ is the probability that the products of the reaction $r_j$ decay to complexes that represent $F$. More formally, for some reaction $r_j = (C, D)$, where $D = \mset{d_1, d_2, \ldots}$:

\begin{equation}\label{eqn:pdecay-reaction}
\pdecay{r_j}{F} = \sum_{\set{F'_1 + F'_2 + \ldots F'_n = F}} \pdecay{d_1}{F'_1} \pdecay{d_2}{F'_2} \ldots \pdecay{d_n}{F'_n}
\end{equation}

The sum is taken over all of the ways that the fates 
$\mset{F'_k} \in \mset{\FT(d_1) \times \FT(d_2) \times \ldots : d_k \in D }$
can combine such that the $\mset{F'_k}$ sum to our target fate $F$ (that is, $\sum_{k} F'_k = F$. Within the sum, we want to calculate the joint probability that all of the complexes $d_k \in D$ decay to their respective fates $F'_k$. These terms can be computed recursively by \eqn{\ref{eqn:pdecay-complex}}.

Finally, we can write an expression for our quantity of interest. We want to know the probability $\Pr{T_{B \to \hat{B}}}$; however, we realize now that this can be computed easily using \eqn{\ref{eqn:pdecay-reaction}}:
$$ \Pr{T_{B \to \hat{B}}} = \pdecay{r}{\hat{B}},$$
where $r$ is the original, detailed bimolecular reaction.

Now we have shown how to compute all the terms necessary to evaluate \eqn{\ref{eqn:condensed-k}}. The structure of our arguments has mirrored the algorithm for deriving the condensed reactions---it is simple to efficiently evaluate \eqn{\ref{eqn:prepr-stationary}, \ref{eqn:pdecay-complex}, and \ref{eqn:pdecay-reaction}} alongside \alg{\ref{alg:rxn-condense}}, and therefore to compute a rate constant $k$ for each condensed reaction.

% % ====================================================================

\section{Discussion}

We have presented an algorithm for exhaustive enumeration of hybridization reactions between a set of DNA species. Our enumerator is more flexible than previously-presented work, allowing enumeration of essentially all non-pseudoknotted structures; previous enumerators such as Visual DSD have been limited to structures without hairpins or internal loops \cite{Lakin:2012hb}. The imposed separation of timescales allows for general bimolecular reactions to be enumerated without producing a large number of physically implausible reactions and forming a (potentially infinite) polymeric structure. By instead insisting that (at sufficiently low concentrations) all unimolecular reactions will precede bimolecular reactions, we allow unimolecular reactions (for instance, ring-closing or branch migration) to terminate such polymerization reactions. The transient intermediate complex, that could otherwise participate in polymerization, is excluded from consideration of bimolecular reactions.

Further, we have demonstrated a convenient representation for condensing large reaction networks into more compact and interpretable reaction networks. We have proven that this transformation preserves the relevant properties of the detailed reaction network---namely, that all transitions between resting sets are possible in the condensed reaction network---and that the condensed reaction network does not introduce spurious transitions \emph{not} possible in the detailed reaction network. 

Our implementation exhaustively enumerates the full reaction network (or a truncated version of the network if certain trigger conditions are reached). Simulation can then be performed on the full reaction network to determine steady-state or time-course behavior. Visual DSD, includes a \term{just-in-time} enumeration mode, which combines the enumeration and simulation processes. This algorithm begins with a multiset of initial complexes, and at each step generates a set of possible reactions among those complexes; these possible reactions are selected from probabilistically to generate a new state. In this way, the model produces statistically correct samples from the continuous time Markov chain that represents the time-evolution of the ensemble. This algorithm could be an interesting future extension to our implementation.

{\bf Acknowledgements.}
The authors thank Chris Thachuk, Niles Pierce, Peng Yin, and Justin Werfel for discussion and support. This work was supported by the National Science Foundation grants CCF-1317694 and CCF-0832824 (The Molecular Programming Project).

{\footnotesize
\begin{spacing}{0.9}
\bibliography{library}
\bibliographystyle{ieeetr}
\end{spacing}
}

\clearpage
\appendix
\appendixpage
% ====================================================================

\setcounter{figure}{0}
\renewcommand{\thefigure}{S\arabic{figure}}
\setcounter{algorithm}{0}
\renewcommand{\thealgorithm}{S\arabic{algorithm}}

\section{Reaction enumeration algorithm details} \label{sec:reaction-enumeration-algorithm-details}
The reaction enumeration algorithm works by passing complexes through
a progression of several mutable sets; as new complexes are enumerated, they are added to one 
of these sets, then eventually removed and transferred to a later set. All 
complexes accumulate in either $\li{T}$ (transient complexes) or $\li{E}$ (resting state complexes). 
This progression enforces the requirement that each complex be
classified as a resting state complex or transient complex, that all
fast reactions are enumerated before slow reactions, and that complexes
are not enumerated more than once. For simplicity, we assume an operation
$\Call{Pop}{S}$ exists that removes and returns some element from the mutable set $S$.

\begin{itemize}
\item
	$\li{B}$ contains complexes that have had no reactions enumerated yet.
	Complexes are moved out of $\li{B}$ into $\li{F}$ when their
	\term{neighborhood} is considered. We define a \term{neighborhood} about 
	a complex $c$ to be the set of complexes that can be produced by a series of zero or more 
	fast reactions starting from $c$.

\item
	$\li{F}$ contains complexes in the current neighborhood which have
	not yet had fast reactions enumerated. These complexes will be moved
	to $\li{N}$ once their fast reactions have been enumerated.
\item
	$\li{N}$ contains complexes enumerated within the current neighborhood, 
	but that have not yet been characterized as transient or resting states. 
	Each of these complexes is classified, then moved into $\li{S}$ or $\li{T}$.
\item
	$\li{S}$ contains resting state complexes which have not yet had
	bimolecular reactions with set $\li{E}$ enumerated yet. All
	self-interactions for these complexes have been enumerated.
\item
	$\li{E}$ contains enumerated \emph{resting state complexes}. Only
	cross-reactions with other end states need to be considered for these
	complexes. These complexes will remain in this list throughout
	function execution.
\item
	$\li{T}$ contains \emph{transient complexes} which have had their fast
	reactions enumerated. These complexes will remain in this list
	throughout function execution.
\end{itemize}

\fig{\ref{fig:enum-lists}} summarizes the progression of complexes through these sets. Additionally, two other sets are accumulated over the course of the enumeration:

\begin{itemize}
\item
	$\li{R}$ contains all reactions that have been enumerated.
\item
	$\li{Q}$ contains all resting states that have been enumerated.
\end{itemize}

\begin{figure}[h]
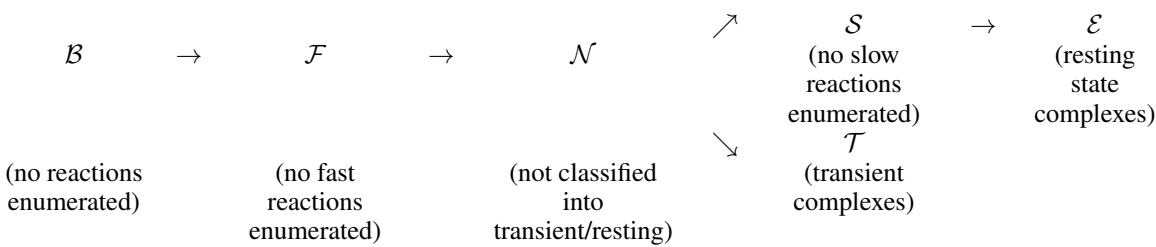

		\begin{tabulary}{\textwidth}{ClClClClC}
		~                         & ~             & ~                              & ~             & ~                                       & $\nearrow$ & $\li{S}$                       & $\rightarrow$ & $\li{E}$                  \\
		$\li{B}$                  & $\rightarrow$ & $\li{F}$                       & $\rightarrow$ & $\li{N}$                                & ~          & (no slow reactions enumerated) & ~             & (resting state complexes) \\
		~                         & ~             & ~                              & ~             & ~                                       & $\searrow$ & $\li{T}$                       & ~             & ~                         \\
		(no reactions enumerated) & ~             & (no fast reactions enumerated) & ~             & (not classified into transient/resting) & ~          & (transient complexes)          & ~             & ~                         \\                            & ~             & ~                         \\
		\end{tabulary}
\caption[Passage of complexes through enumerator]{Passage of complexes through various lists; complexes begin with no outgoing reactions enumerated ($\li{B}$), then have fast reactions in their neighborhood enumerated ($\li{F}$, $\li{N}$) before being classified into transient ($\li{T}$) or resting state ($\li{S}$) complexes; finally slow reactions are enumerated ($\li{E}$).}
\label{fig:enum-lists}
\end{figure}

In \alg{\ref{alg:rxn-enum}} we describe the reaction enumeration algorithm in detail.

\subsection{Terminating conditions}\label{sec:terminating-conditions}

Although our enumerator is designed to avoid enumerating implausible polymerization reactions, as described in \fig{\ref{fig:polymer}}, it is possible to enumerate systems which result in genuine polymerization, such as those described by \cite{Dirks:2004cb,Venkataraman:2007bw}. To allow such enumerations to terminate, our enumerator places a soft limit on the maximum number of complexes and the maximum number of reactions that can be enumerated before the enumerator will exit. These limits are checked before the neighborhood of fast reactions is enumerated, for each complex in $B$. The limits are configurable by the user.

If the number of complexes in $\li{E} \cup \li{S} \cup \li{T}$ is greater than the maximum number of complexes, or the number of reactions in $\li{R}$ is greater than the configured maximum, the partially-enumerated network is ``cleaned up'' by deleting all reactions in $\li{R}$ that produce complex(es) leftover in $\li{B}$. That is, no complex will be reported in the output that has not had its neighborhood of fast reactions enumerated exhaustively. Evaluating the limits only \emph{between} consideration of each neighborhood (rather than during) prevents the pathological mis-classification of resting state and transient complexes, but means the limit may be exceeded if the last neighborhood considered is large.

\section{Justification of the condensed reaction algorithm}\label{sec:justification-condensed}

We will now justify the algorithm for condensing reactions with several theorems that show the
relationship between the condensed reaction network $\hat{G} = (\li{\hat{C}},
\li{\hat{R}})$ and the detailed reaction network $G = \li{C}, \li{R}$. To do this,
we introduce two definitions.  

First, we need a notion of what kind of processes
from the detailed  reaction network are actually included in the condensed
reaction network. 
We define a \term{fast transition} $T_{\mset{x} \to B}$ to be a sequence of (zero or more) unimolecular reactions that begin from a single initial (transient or resting state) complex $x$ and result in a multiset $B$ of resting state complexes.
A \term{resting state transition} $T_{\mset{a_1, a_2} \to B}$ is a sequence of detailed reactions starting with a bimolecular (slow) reaction (by definition between two resting state complexes $a_1$ and $a_2$), followed by a sequence of (zero or more) unimolecular reactions that can occur if the system starts with just $a_1$ and $a_2$ 
present, and such that the final state $B$ consists exclusively of resting state complexes.

Second, we need a notion of correspondence between some reaction in the condensed 
reaction network and a \emph{transition} that can occur in the detailed reaction network.
For some multiset of resting states
$\hat{A} = \mset{\hat{A}_1, \hat{A}_2, \ldots}$, where each
$\hat{A_i} = \set{a_{i,1}, a_{i,2}, \ldots}$, a \term{representation} of
$\hat{A}$ is a set containing a choice of \emph{one} complex $a_{i,j}$ from each
$\hat{A}_i$. Note that if any of the sets $\hat{A_i} \in \hat{A}$ are not singletons,
then there are multiple representations of $\hat{A}$. 
For example, if $\hat{A} = \mset{ \hat{A}_1, \hat{A}_2 }$, 
$\hat{A}_1 = \set{ a_{1,1}, a_{1,2} }$, and $\hat{A}_2 = \set{ a_{2,1}, a_{2,2} }$, then there
are four possible representations of $\hat{A}$: 
$\set{ a_{1,1}, a_{2,1} }$, $\set{ a_{1,2}, a_{2,1} }$,
$\set{ a_{1,1}, a_{2,2} }$, or $\set{ a_{1,2}, a_{2,2} }$.
We can write $A \repr \hat{A}$ to indicate that $A$ is a representation of $\hat{A}$\footnote{Note that $\repr$ itself is not an equivalence relation, since the left-hand side (multisets of complexes) and the right-hand side (multisets of resting states) are not members of the same set and therefore neither symmetry not reflexivity hold. It could be said that each of the resting states form an equivalence class, and the set $\li{Q}$ of resting states is the quotient space of this equivalence class. However, the DAG $\Gamma'$ is not simply the quotient graph of $\Gamma$ (the graph between complexes, connected by (1,1) reactions) under this equivalence relation, because $(1,2)$ reactions are not represented in $\Gamma$, yet must still generate possible fates in $\Gamma'$.}

% - - - - - - - - - - - - - - - - - - - - - - - - - - - - - - - - - - - - - - 
\begin{lemma}\label{lem:fate-to-transition} 
For every complex $x$, and for every fate $F$ in the set of fates $\FT(x)$, and for every $B$ such that $B \repr F$, there exists a fast transition $T_{ \mset{x} \rightarrow B }$.
\end{lemma}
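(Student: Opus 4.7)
The plan is to proceed by well-founded induction on the directed acyclic graph $\Gamma'$ described in the excerpt, whose vertices are the SCCs of $\Gamma = (\li{C}, \li{R}_f^{(1,1)})$ and whose edges record the $(1,m)$ reactions leaving each SCC. Since $\Gamma'$ is acyclic, any topological ordering of its vertices gives a well-founded relation; I will induct on the rank of $\SC(x)$ in that order. Because Equation~(\ref{eqn:fate-recursion}) depends on $x$ only through $\SC(x)$, the induction is well-defined (two complexes in the same SCC have the same fate set), and property~3 of the enumeration model ensures that the only edges internal to an SCC in $\Gamma$ are $(1,1)$ edges, so fast transitions inside an SCC never produce extra byproducts.

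For the base case, when $\SC(x)$ is a leaf of $\Gamma'$, i.e.\ a resting set, $\FT(x) = \set{\SC(x)}$, so $F = \SC(x)$ and any representation $B \repr F$ must be a singleton $\mset{a}$ with $a \in \SC(x)$. Strong connectivity of $\SC(x)$ in $\Gamma$ supplies a path of fast $(1,1)$ reactions from $x$ to $a$, which is exactly the required fast transition $T_{\mset{x} \to \mset{a}}$.

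For the inductive step, suppose $\SC(x)$ is transient. By the recursive case of Equation~(\ref{eqn:fate-recursion}), $F \in \RFT(r)$ for some outgoing reaction $r = (\mset{a}, B_r) \in R_o(\SC(x))$ with $a \in \SC(x)$ and $B_r = \mset{b_1, \ldots, b_n}$, and by the definition of the Cartesian sum $F$ decomposes as $F = F_1 + \cdots + F_n$ with each $F_i \in \FT(b_i)$. Since $B \repr F$, I can split $B$ correspondingly as $B = B_1 + \cdots + B_n$ with $B_i \repr F_i$. Each $b_i$ sits strictly below $x$ in the topological order of $\Gamma'$, so the inductive hypothesis yields fast transitions $T_{\mset{b_i} \to B_i}$ for each $i$. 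I will then assemble the desired transition by (i) walking within $\SC(x)$ from $x$ to $a$ using fast $(1,1)$ reactions, (ii) firing $r$ to reach $B_r$, and (iii) interleaving the $n$ sub-transitions $T_{\mset{b_i} \to B_i}$ in any order, since they act on disjoint components of the intermediate multiset and commute as elementary unimolecular moves. The concatenation is a fast transition $T_{\mset{x} \to B}$, completing the induction.

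The main obstacle I expect is the multiset bookkeeping in step (iii): splitting $B$ into $B_1, \ldots, B_n$ so that $B_i \repr F_i$. This is essentially a counting argument, but it needs care when several $F_i$'s share a common resting set, because the copies of a given complex in $B$ must be distributed consistently across the $B_i$'s. One clean way to handle this is to fix, up front, an ordered enumeration of the elements of $F$ as a multiset and match it position by position to $B$, then partition the positions according to the decomposition $F = F_1 + \cdots + F_n$. Once that combinatorial splitting is carried out, the rest of the argument is a direct unfolding of the recursion defining $\FT$.
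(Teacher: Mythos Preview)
Your proof is correct and follows essentially the same inductive argument as the paper: base case for resting sets via strong connectivity, inductive step by choosing an outgoing reaction $r$ that witnesses $F \in \RFT(r)$ and recursing on its products. You are in fact more explicit than the paper about two points it elides---walking inside $\SC(x)$ via $(1,1)$ reactions to reach the actual reactant of $r$, and the multiset decomposition $B = B_1 + \cdots + B_n$ with $B_i \repr F_i$---so the extra care you take with the bookkeeping is appropriate rather than superfluous.
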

\begin{proof}
Consider a single fate $F \in \FT(x)$. In the base case where $x$ is a resting complex, then $\FT(x) = \set{ \SC(x) }$ is singleton, and we take $F = \SC(x)$. If $\SC(x)$ is non-singleton, then any transition between $x$ and another complex in $b \in \SC(x)$ will satisfy the property that $B = \mset{b} \repr F$. If $\SC(x)$ is singleton ($\SC(x) = \set{x}$), then the transition is degenerate $T_{\mset{x} \to \mset{x}}$, but still satisfies the propery that $B = \mset{x} \repr F$.

When $x$ is not a resting state complex, recognize that each fate $F \in \FT(x)$ was generated by application of the recursive case of \eqn{\ref{eqn:fate-recursion}}, in which a union is taken over outgoing reactions from $\SC(x)$. That is, each fate $F \in \FT(x)$ is generated by some outgoing reaction $r = (\alpha, \beta)$ from $\SC(x)$. Specifically, $F$ is one element of the set $\RFT(r) = \FT(\beta) = \bigoplus_{b \in \beta} \FT(b)$. For any $B \repr F$, the fast transition $T_{\mset{x} \to B}$ can thus be accomplished by first following $r$, 
followed by the concatenation of $T_{ \mset{b} \to \FT(b) }$ for each $b \in \beta$.

By induction, we recognize that, for any complex $x$ and fate $F \in \FT(x)$, a detailed fast transition can be accomplished from $x$ to $B \repr F$.
\end{proof}

% - - - - - - - - - - - - - - - - - - - - - - - - - - - - - - - - - - - - - - 
\begin{theorem} (Condensed reactions map to detailed reactions) 
For every condensed reaction $\hat{r} = ( \hat{A}, \hat{B} )$, for every $A$ that represents $\hat{A}$, and for every $B$ that represents $\hat{B}$, there exists a detailed resting state transition $T_{A \to B}$.
\end{theorem}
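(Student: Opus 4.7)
The plan is to reduce to Lemma~\ref{lem:fate-to-transition} by first identifying a suitable bimolecular reaction in the detailed network and then using that lemma to build the fast-reaction tail. By construction of $\hat{R}$, the condensed reaction $\hat{r} = (\hat{A}, \hat{B})$ can only appear because there is some detailed slow reaction $s_0 = (A_0, B_0) \in \li{R}$ with $\SC(A_0) = \hat{A}$ and $\hat{B} \in \RFT(s_0) = \bigoplus_{b \in B_0} \FT(b)$. This immediately supplies a candidate bimolecular first step $s_0$ and a decomposition $\hat{B} = \sum_{b \in B_0} F_b$ for some choice of fates $F_b \in \FT(b)$.

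For the special case $A = A_0$, I would fix any $B \repr \hat{B}$ and split $B = \sum_{b \in B_0} B_b$ with each $B_b \repr F_b$. For each $b \in B_0$, Lemma~\ref{lem:fate-to-transition} supplies a fast transition $T_{\mset{b} \to B_b}$. Concatenating $s_0$ with these pairwise disjoint fast transitions (interleaved arbitrarily, since they act on independent molecules) yields a resting state transition $T_{A_0 \to B}$.

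The main obstacle is the general case $A \neq A_0$: the resting state transition must \emph{begin} with a bimolecular reaction whose reactants are precisely the complexes in $A$, so $s_0$ cannot simply be reused. My plan is to show that the enumerator also produces a matching detailed slow reaction $s = (A, B')$ with $\hat{B} \in \RFT(s)$, and then finish exactly as in the $A = A_0$ case. To obtain such an $s$, I would invoke two facts: (i) by property~4 together with the enumerator's progression of complexes through $\li{S} \to \li{E}$ described in the appendix, every pair of resting complexes drawn from the resting sets comprising $\hat{A}$ has its bimolecular interactions exhaustively enumerated; and (ii) because $A$ and $A_0$ differ only in which representative of each resting set they pick, and the complexes in each resting set are strongly connected by fast $(1,1)$ reactions, any bimolecular product $B'$ obtained from $A$ must admit the same family of resting-set fates as $B_0$, so $\hat{B}\in\RFT(s)$. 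The hard part will be establishing (ii) rigorously: one must show that swapping the representative used in a resting set preserves the reachable fates of the resulting bimolecular product. I expect this to go through by induction along the fast $(1,1)$ chain connecting $A$ to $A_0$ inside their resting sets, verifying that each single-step swap preserves a downstream pathway to $\hat{B}$. Once this symmetry claim is in place, the rest of the argument is routine multiset bookkeeping combined with Lemma~\ref{lem:fate-to-transition}.
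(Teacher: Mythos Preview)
You have correctly identified a subtlety that the paper's own proof does not address: the paper simply picks the particular detailed bimolecular reaction $r=(A_0,A')$ that generated $\hat r$, invokes Lemma~\ref{lem:fate-to-transition} on the products $A'$, and concludes.  Its final sentence is literally existential (``there exists a transition $T_{A\to B}$ such that $A\repr\hat A$ and $B\repr\hat B$''), so the paper never actually discharges the ``for every $A$'' clause of the theorem as written.  Your special case $A=A_0$ together with Lemma~\ref{lem:fate-to-transition} is exactly the paper's argument, and that part is fine.

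Where your proposal runs into trouble is step~(ii) for general $A\neq A_0$.  The definition of a resting state transition requires that the \emph{first} reaction be bimolecular with reactants exactly $a_1,a_2\in A$; no preliminary fast $(1,1)$ moves inside the resting sets are allowed.  But nothing in the enumeration semantics guarantees that a bimolecular reaction exists from every representative.  Concretely, take $\hat A_1=\{p,q\}$ where $p$ has a short toehold domain $a$ unpaired and $q$ is the conformer with that toehold closed into a short hairpin (so $p\leftrightarrow q$ by fast bind/open), and $\hat A_2=\{r\}$ with $r$ exposing $a^*$.  The detailed reaction $p+r\to[pr]$ generates a condensed reaction $\hat A_1+\hat A_2\to\hat B$, yet $q+r$ has \emph{no} $(2,1)$ binding move at all, so there is no resting state transition $T_{\{q,r\}\to B}$ for any $B$.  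Your induction along the $(1,1)$ chain from $A$ to $A_0$ never gets off the ground here, because the obstruction is not about matching fates downstream---it is that the required initial bimolecular step simply does not exist from this representative.

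In short: the paper's proof and your $A=A_0$ case coincide and are correct; the ``for every $A$'' strengthening is genuinely not established by the paper, and your plan to recover it via~(ii) cannot succeed without either relaxing the definition of resting state transition (to permit fast $(1,1)$ moves within the reactant resting sets before the slow step) or weakening the theorem to an existential over $A$.
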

\begin{proof} 
First, recognize that every condensed reaction $\hat{r} = (\hat{A}, \hat{B})$ was generated by some bimolecular reaction $r = (A, A')$, where $A$ contains only resting state complexes and represents $\hat{A}$. Therefore we must only show that there exists a fast transition $T_{A' \to B}$, such that $B \repr \hat{B}$. We recognize that the multiset of products $\hat{B}$, of the condensed reaction $\hat{r}$, was generated from one element of $\RFT(r) = \FT(A')$. Therefore, $\hat{B}$ is an element of $\FT(A')$. By \lem{\ref{lem:fate-to-transition}}, there exists a detailed transition $T_{A' \to B}$, such that $B \repr \hat{B}$. Therefore there exists a transition $T_{A \to B}$ such that $A \repr \hat{A}$ and $B \repr \hat{B}$.
\end{proof}

% - - - - - - - - - - - - - - - - - - - - - - - - - - - - - - - - - - - - - - 
\begin{lemma}\label{lem:transition-to-fate} For some complex $x$, each fast transition $T_{ \mset{x} \to B }$, such that $B$ contains only resting state complexes, corresponds to exactly one fate $F \in \FT(x)$. Specifically, there exists some fate $F \in \FT(x)$ such that $B \repr F$.\end{lemma}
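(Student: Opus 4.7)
I would prove this as a converse to \lem{\ref{lem:fate-to-transition}}, by induction on the length $n$ (number of reactions) of the fast transition $T_{\mset{x} \to B}$. The base case is $n=0$: then $B = \mset{x}$, and since $B$ must consist only of resting state complexes, $x$ itself is a resting complex. By the base case of the recursion in \eqn{\ref{eqn:fate-recursion}}, $\FT(x) = \set{\SC(x)}$, so taking $F = \SC(x)$ gives $B = \mset{x} \repr F$ immediately.

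For the inductive step, I would let the first reaction be $r = (\mset{x}, D)$, which (being fast) is unimolecular, and split into two cases depending on whether $r$ is internal to $\SC(x)$ or outgoing from it. In \emph{Case 1}, $r$ is a $(1,1)$ reaction whose product $d$ lies in $\SC(x)$; then the remaining transition $T_{\mset{d} \to B}$ has length $n-1$ and satisfies the inductive hypothesis, yielding some $F \in \FT(d)$ with $B \repr F$. Because $d$ and $x$ share an SCC, \eqn{\ref{eqn:fate-recursion}} gives $\FT(d) = \FT(x)$ (both equal $\set{\SC(x)}$ if $\SC(x)$ is a resting set, or the same union over $R_o(\SC(x))$ otherwise), so $F \in \FT(x)$ as needed. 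In \emph{Case 2}, $r$ is outgoing from $\SC(x)$ (either a $(1,1)$ reaction exiting the SCC, or a $(1,m)$ reaction with $m \ge 2$). Here I would argue that the remaining $n{-}1$ unimolecular reactions decompose uniquely into independent sub-transitions $T_{\mset{d_i} \to B_i}$, one per $d_i \in D$, whose products satisfy $B = \sum_i B_i$. Each sub-transition is strictly shorter, so by induction there exist $F_i \in \FT(d_i)$ with $B_i \repr F_i$. Setting $F := \sum_i F_i$ gives $F \in \bigoplus_{d \in D} \FT(d) = \RFT(r)$, and since $r \in R_o(\SC(x))$, the recursive case of \eqn{\ref{eqn:fate-recursion}} yields $F \in \RFT(r) \subseteq \FT(x)$. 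The representation $B \repr F$ follows because $\repr$ distributes over multiset union.

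\textbf{Main obstacle.} The principal subtlety is the decomposition argument in Case 2: one needs to verify that a sequence of unimolecular reactions starting from the multiset $D$ really does split cleanly into one independent fast transition per element of $D$, with the union of terminal multisets equal to $B$. This requires a short token-tracking lemma---since every unimolecular reaction consumes exactly one complex and we may tag each newly produced complex with a lineage label identifying which $d_i$ it descends from, the reactions partition into sub-sequences that are themselves valid fast transitions, each terminating in only resting complexes because $B$ does. Once this bookkeeping is in hand, the induction closes. Finally, the ``exactly one'' clause is automatic: given $B = \mset{b_1, b_2, \ldots}$ consisting of resting complexes, each $\SC(b_k)$ is the uniquely determined resting set containing $b_k$, so $F$ is forced to be the multiset $\mset{\SC(b_1), \SC(b_2), \ldots}$, and no other fate in $\FT(x)$ can represent $B$.
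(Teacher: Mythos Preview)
Your proof is correct and takes a genuinely different route from the paper's. Both arguments isolate the first reaction $r^0 = (\mset{x}, Y)$, but then diverge. The paper argues \emph{backward}: it first defines the target fate directly as $F = \mset{\SC(b) : b \in B}$, identifies the set $R_B$ of ``terminal'' reactions that produced the complexes in $B$, and then shows that each partial fate $F_i \in \RFT(r_i)$ for $r_i \in R_B$ propagates backward through the chain of reactions (using the observation $F' \in \FT(A_i) \Rightarrow F' \in \FT(A_{i-1})$) until it reaches $r^0$, whence $\sum_i F_i = F \in \RFT(r^0) \subseteq \FT(x)$. You instead induct \emph{forward} on transition length, split on whether the first reaction stays inside $\SC(x)$ or exits it, and in the exiting case decompose the tail into independent per-product sub-transitions to which the inductive hypothesis applies, recombining via the Cartesian sum.

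Your approach makes the inductive structure explicit and cleanly isolates the one real bookkeeping obligation---the token-tracking decomposition---that the paper leaves implicit in the phrase ``each of which decays independently.'' Your Case~1/Case~2 split is also sound: property~3 of the detailed network guarantees that no product of a $(1,m)$ reaction with $m \ge 2$ can lie in $\SC(x)$, so every non-internal first reaction is genuinely in $R_o(\SC(x))$ and the recursive case of \eqn{\ref{eqn:fate-recursion}} applies. The paper's backward construction has the mild advantage that $F$ is read off from $B$ at the outset, making the ``exactly one'' claim immediate; in your version that uniqueness emerges at the end, exactly as you note.
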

\begin{proof}
Consider the base case where $x$ is a resting state complex; in this case, all fast transitions from $x$ must lead to another resting state complex in $\SC(x)$. $\FT(x) = \set{ \SC(x) }$, by \eqn{\ref{eqn:fate-recursion}}, and therefore this transition corresponds to the fate $F = \SC(x)$. 

Consider some detailed fast transition $T_{ \mset{x} \to B }$ such that $B = \mset{b_1, b_2, \ldots}$ contains only resting state complexes. We recognize that, if $x$ is \emph{not} a resting state complex, there must be at least one reaction in this process. The transition begins with this initial reaction $r^0 = (\mset{x}, Y)$; $Y$ may have multiple products, each of which decays independently to some complex or set of complexes in $B$.

Realize that, for some reaction $r_i = (A_{i-1},A_{i})$, by applying \eqn{\ref{eqn:fate-recursion}} we recognize that if a fate $F$ is reachable from $A_{i}$, then it is reachable from $A_{i-1}$. That is, for some fate $F$, $F \in \FT(A_{i}) \implies F \in \FT(A_{i-1})$. This means that, for some prior reaction $r_{i-1} = (A_{i-2}, A'_{i})$ such that $A_{i} \subseteq A'_{i}$---that is, a reaction $r_{i-1}$ that produces the reactant of $r_{i}$---$F \in \RFT(r_i) \implies F \in \RFT(r_{i-1})$. 

Next, we note that the set of products $B$ of the transition $T_{ \mset{x} \to B }$ must represent some fate; that is, $B \repr F$. Since $B$ consists exclusively of resting states, $F = \mset{ \SC(b) : b \in B }$. Multiple reactions $r_1, r_2, \ldots r_m$ may have produced the complexes in $B$, let us denote this set $R_B$: $R_B = \set{r_i = (A_i, B_i) \in T_{\mset{x} \to B} : B_i \subseteq B }$; $B$ is therefore the sum of the products of these reactions: $B = \sum_{r_i = (A_i, B_i) \in R_B} B_i$. Because $B \repr F$ and \eqn{\ref{eqn:fate-recursion}} includes all possible sums of $\RFT(B)$, this means that if we choose fates $F_i \in \FT(r_i)$ for each of those reactions, there exists some set $\set{F_1, F_2, \ldots, F_m}$ such that $F_1 + F_2 + \ldots + F_m = F$. 

Consider one of the reactions $r_i = (A_i, B_i) \in R_B$, that produces complex(es) in $B$. Each fate $F' \in \RFT(r_i)$ is \emph{also} a fate of any reaction that produces $A_i$. This means that, for $r_i$, the particular fate $F_i \in \set{F_1, F_2, \ldots, F_m}$ satisfying $\sum_{j = 1}^{m} F_j$ must \emph{also} be a fate of any reaction that produces $A_i$. By induction, we can work backwards from $r_i$ all the way to the initial reaction $r^0$, and recognize that $F_i \subseteq F^0$ for some $F^0 \in \RFT(r^0)$. The same is true for all reactions $r_i \in R_B$. Because the recursive case of \eqn{\ref{eqn:fate-recursion}} sums over all combinations of fates for all such pathways, the $F_1 + F_2 + \ldots + F_m = F$ must be a member of $\RFT(r^0)$, and therefore a member of $\FT(x)$.
\end{proof}

% - - - - - - - - - - - - - - - - - - - - - - - - - - - - - - - - - - - - - - 
\begin{theorem} (Detailed reactions map to condensed reactions) For every detailed resting state transition $T_{A \to B}$, there exists a condensed reaction $\hat{r} = (\hat{A}, \hat{B})$ such that $A$ represents $\hat{A}$ and $B$ represents $\hat{B}$. \end{theorem}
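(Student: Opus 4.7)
The plan is to decompose the given detailed resting state transition $T_{A \to B}$ into its initiating bimolecular reaction and its subsequent cascade of unimolecular reactions, then exhibit a condensed reaction in $\hat{R}$ whose reactants and products represent $A$ and $B$ respectively. By the very definition of a resting state transition, $T_{A \to B}$ begins with a slow reaction $r = (A, Y)$ in which $A = \mset{a_1, a_2}$ consists of two resting state complexes, followed by zero or more unimolecular reactions converting $Y$ into the all-resting-state multiset $B$. I would immediately take $\hat{A} = \mset{\SC(a_1), \SC(a_2)}$, since each $a_i$ is in its resting set $\SC(a_i)$, so $A \repr \hat{A}$ is automatic.

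Next I would handle the fast tail of the transition. Because every fast reaction is of arity $(1,m)$, the cascade acting on $Y = \mset{y_1,\ldots,y_n}$ splits cleanly: each complex in $B$ descends from exactly one $y_i$, and by tracking parentage we partition the cascade into $n$ independent fast transitions $T_{\mset{y_i}\to B_i}$, each ending in resting state complexes, with $B = \sum_i B_i$. To each of these sub-transitions I apply Lemma \ref{lem:transition-to-fate}, which yields a fate $F_i \in \FT(y_i)$ with $B_i \repr F_i$. Setting $\hat{B} = F_1 + F_2 + \cdots + F_n$, the definition of the Cartesian sum gives $\hat{B} \in \bigoplus_{y \in Y} \FT(y) = \FT(Y) = \RFT(r)$, and the representations combine so that $B \repr \hat{B}$.

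Finally, the condensed-reaction construction from Section \ref{sec:condensing-reactions} says that for the slow reaction $r = (A, Y)$ and every fate $\hat{B} \in \RFT(r)$, the pair $(\SC(A), \hat{B}) = (\hat{A}, \hat{B})$ is added to $\hat{R}$. This is precisely the condensed reaction $\hat{r}$ we need, with $A \repr \hat{A}$ and $B \repr \hat{B}$, completing the argument.

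The main obstacle I expect is justifying the clean partition of the fast tail into per-product sub-transitions. Since each reaction in the tail is unimolecular, this is essentially bookkeeping by induction on the length of the cascade, but some care is needed when $Y$ contains multiple identical complexes (so one must pick a parentage assignment rather than appealing to distinguishability) and to ensure that the terminal multiset $B_i$ for each branch consists entirely of resting state complexes, which follows because the only way the global process terminates at resting states is for every branch to do so. Once this partition is set up, Lemma \ref{lem:transition-to-fate} and the associativity of $\oplus$ finish the proof with no further difficulty.
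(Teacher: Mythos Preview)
Your proposal is correct and follows essentially the same approach as the paper's proof: decompose the transition into its initial slow reaction $r = (A, Y)$ followed by a fast tail, apply Lemma~\ref{lem:transition-to-fate} to each product $y_i \in Y$ to obtain a fate $F_i$, and combine these via the Cartesian sum to exhibit $\hat{B} \in \RFT(r)$, which by construction yields the desired condensed reaction. If anything, you are more explicit than the paper about the per-product partition of the fast cascade and the multiset subtlety, which the paper leaves implicit.
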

\begin{proof} 
Since $T_{A \to B}$ is a transition between two sets ($A$ and $B$) of detailed resting state complexes, the transition consists of two steps: first, a bimolecular reaction $r = (A, A')$ converts $A$ to $A'$; second, a series of unimolecular reactions convert the complexes in $A'$ to $B$. The algorithm generates one or more condensed reactions for each detailed bimolecular reaction. Specifically, the algorithm generates one condensed reaction for each combination of fates of the products in $A'$. That is, each of the condensed reactions is generated from one element in $\RFT(r) = \bigoplus_{a' \in A'} \FT(A')$. By \lem{\ref{lem:transition-to-fate}}, for each product $a' \in A'$, $\FT(a')$ corresponds to the set of possible transitions from $a'$ that result in some resting state. Therefore we can choose any possible fast transition between $T_{A' \to B}$, and it will correspond to some element of $\RFT(r)$---and therefore to a condensed reaction $\hat{r} = (\hat{A}, \hat{B})$.
\end{proof}

Intuitively, these two theorems mean that the condensed reaction network
effectively models the detailed reaction network, at least in terms of
transitions between resting states. The first theorem shows that a
condensed reaction must be mapped to a suitable sequence of reactions in the
detailed reaction network. 
The second theorem shows the converse---that
any process in the detailed reaction network is represented by the
condensed reactions.
Having proved these theorems, we propose the following corollaries that extend this reasoning from individual (detailed and condensed) reactions to sequences of condensed and detailed reactions. We omit the proofs.

\begin{corollary}
For any sequence of condensed reactions starting in some initial state $\hat{A}$ and ending in some final state $\hat{B}$, and for any $A \repr \hat{A}$ and for any $B \repr \hat{B}$, there exists a sequence of detailed reactions starting in $A$ and ending in $B$.
\end{corollary}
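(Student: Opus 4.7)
The natural approach is induction on the length $n$ of the condensed reaction sequence, using the theorem ``Condensed reactions map to detailed reactions'' to handle each single step and the strong connectivity of each resting set (under fast $(1,1)$ reactions) to reconcile any mismatch between the specific representations demanded at the endpoints.

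For the base case $n=0$, we have $\hat{A}=\hat{B}$, and both $A$ and $B$ pick one complex from each resting set listed in $\hat{A}$. Because every resting set is by definition an SCC of the graph $(\li{C},\li{R}_f^{(1,1)})$, there is a path of fast $(1,1)$ reactions from the $i$-th representative in $A$ to the $i$-th representative in $B$ inside the common resting set. Concatenating these per-resting-set paths (the other complexes merely spectate, since each reaction is unimolecular) gives the required detailed trajectory $A\to B$.

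For the inductive step, let $\hat{r}_{n+1}=(\hat{X},\hat{Y})$ be the last condensed reaction in the sequence, and let $\hat{A}^{(n)}$ denote the multiset of resting sets present just before $\hat{r}_{n+1}$, so $\hat{X}\subseteq\hat{A}^{(n)}$ and $\hat{B}=(\hat{A}^{(n)}\setminus\hat{X})\cup\hat{Y}$. Fix arbitrary representations $A\repr\hat{A}$ and $B\repr\hat{B}$, and choose a decomposition $B=C'\cup Y'$ with $C'\repr\hat{A}^{(n)}\setminus\hat{X}$ and $Y'\repr\hat{Y}$; such a decomposition exists because each complex in $B$ belongs to a unique resting set in $\hat{B}$. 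Pick any $X'\repr\hat{X}$ and define $A^{(n)}:=C'\cup X'\repr\hat{A}^{(n)}$. By the induction hypothesis applied to the first $n$ condensed reactions, there is a detailed sequence $A\to A^{(n)}$; by the theorem ``Condensed reactions map to detailed reactions'' applied to $\hat{r}_{n+1}$ with reactant representation $X'$ and product representation $Y'$, there is a detailed resting state transition $T_{X'\to Y'}$. Since this transition's reactions only involve the complexes in $X'$ and their descendants, the spectator multiset $C'$ is unaffected, yielding the extended trajectory $A\to A^{(n)}=C'\cup X'\to C'\cup Y'=B$, as required.

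The main obstacle will be the bookkeeping in the inductive step: one must choose the intermediate representation $A^{(n)}$ so that its ``spectator part'' already matches the corresponding part of $B$, rather than hoping to fix a mismatch afterward. This is why the decomposition of $B$ is performed \emph{first} and $A^{(n)}$ is built around it. The base case plays a supporting role in the argument because it guarantees that, even if the theorem for a single condensed reaction left us at a different representation $Y''\repr\hat{Y}$ than $Y'$, we could rearrange $Y''\to Y'$ within its resting set by fast $(1,1)$ reactions; in the construction above this flexibility is absorbed into the quantification ``for any $Y'\repr\hat{Y}$'' already provided by the theorem.
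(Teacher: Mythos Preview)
The paper explicitly omits the proof of this corollary (it states ``Proof omitted.''), so there is no reference argument to compare against. Your induction on the length of the condensed sequence, invoking the theorem ``Condensed reactions map to detailed reactions'' for each step and using the strong connectivity of resting sets for the base case and endpoint alignment, is the natural route and is correct.

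One small imprecision worth tightening: when you justify the decomposition $B=C'\cup Y'$ by saying ``each complex in $B$ belongs to a unique resting set in $\hat{B}$,'' this is not quite the right reason. A given resting set may occur with positive multiplicity in \emph{both} $\hat{A}^{(n)}\setminus\hat{X}$ and $\hat{Y}$, so the decomposition is not determined by which resting set each complex lies in. What you actually need (and what is true) is that for each resting set $Q$, the multiplicity of $Q$ in $\hat{B}$ equals the sum of its multiplicities in $\hat{A}^{(n)}\setminus\hat{X}$ and in $\hat{Y}$, so you can allocate the $Q$-representatives in $B$ accordingly. This is a bookkeeping fix only; the argument goes through.
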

Proof omitted.

\begin{corollary}
Conversely, for any sequence of detailed reactions starting in some multiset of resting state complexes $A$ and ending in some multiset of resting state complexes $B$, there exists a sequence of condensed reactions starting in $\hat{A}$ and ending in $\hat{B}$ such that $A \repr \hat{A}$ and $B \repr \hat{B}$.
\end{corollary}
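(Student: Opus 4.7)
The plan is to induct on the number $k$ of slow (bimolecular) reactions appearing in the given detailed sequence, and to show that each such slow reaction, together with its subsequent fast decay, can be extracted as a resting state transition and mapped by the previous theorem to a single condensed reaction.

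For the base case $k=0$, the sequence consists entirely of fast unimolecular reactions, all applied to complexes starting in resting sets. By definition of a resting set (no outgoing fast reactions of any arity), every such fast reaction must keep each affected complex inside its original resting set. Hence $B$ lies in the same multiset of resting sets as $A$, so $A \repr \hat{A}$ and $B \repr \hat{A}$, and the empty sequence of condensed reactions satisfies the claim with $\hat{B}=\hat{A}$.

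For the inductive step, locate the first slow reaction $r=(A',C)$ in the detailed sequence. Every reaction preceding $r$ is fast and begins from complexes in $A$, so by the base-case argument these fast reactions only shuffle complexes within their resting sets; thus immediately before $r$ fires, the state is some $A''\repr\hat{A}$ with $A'\subseteq A''$, and $A'$ consists of two resting state complexes (guaranteed by property 4). The key structural observation I would then invoke is that unimolecular reactions commute with any other reaction that does not consume the same complex: a fast reaction on a spectator can always be swapped past $r$, and fast reactions descending from $C$ can be advanced in the ordering until all descendants of $C$ reach resting state complexes, yielding some multiset $D$. Since any later slow reaction can only consume resting state reactants (property 4), pulling this fast decay forward never blocks a subsequent reaction. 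The resulting re-ordered prefix $r$ followed by the fast decay of $C$ into $D$ is, by definition, a resting state transition $T_{A'\to D}$. The previous theorem (Detailed reactions map to condensed reactions) then supplies a condensed reaction $\hat{r}=(\hat{A}',\hat{D})$ with $A'\repr\hat{A}'$ and $D\repr\hat{D}$. The remainder of the sequence now carries the multiset $(A''\setminus A')\cup D$, which is a multiset of resting state complexes representing $(\hat{A}\setminus\hat{A}')\cup\hat{D}$, to $B$ using only $k-1$ slow reactions; by the induction hypothesis there is a condensed sequence realizing this, and prepending $\hat{r}$ yields the desired condensed sequence from $\hat{A}$ to some $\hat{B}$ with $B\repr\hat{B}$.

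The main obstacle is justifying the reordering step cleanly: one must argue that advancing the fast decay of $C$ to completion before any subsequent slow reaction is always possible without altering the set of reachable states. This relies on two facts I would state explicitly and then use without deep argument, since both follow from the $(1,n)$ arity of fast reactions and property 4: (i) fast reactions act on a single complex and so commute with any reaction disjoint from their reactant, and (ii) any complex consumed by a later slow reaction is already a resting state complex and is therefore stable under any fast reactions one chooses to perform first. Granting these, the inductive step goes through and the corollary follows.
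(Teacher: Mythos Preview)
The paper explicitly omits the proof of this corollary, so there is no argument to compare against; your task is really to supply one, and the induction on the number $k$ of slow reactions, combined with Theorem~2, is the natural route and is essentially correct.

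One point deserves a sharper statement. When you ``advance the fast decay of $C$ to completion,'' you must take $D$ to be the specific multiset of resting state complexes that the descendants of $C$ reach \emph{in the original sequence} before any of them is consumed by a later slow reaction, not merely some arbitrary set of resting states reachable from $C$. This $D$ is well-defined because (i) transient descendants of $C$ can only be acted upon by fast $(1,n)$ reactions (property~4 forbids slow reactions from consuming them), and (ii) such fast reactions have a single reactant and therefore commute with any reaction whose reactants are disjoint. Hence every fast step on a transient descendant of $C$ can be swapped leftward past intervening reactions until all of them sit immediately after $r$; the result is exactly the resting state transition $T_{A'\to D}$, and the remaining reactions form a bona fide detailed sequence from the resting-state multiset $(A''\setminus A')+D$ to $B$ with $k-1$ slow reactions, so the induction hypothesis applies. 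With this made explicit your argument closes. A minor notational quibble: since the states are multisets, write $(A''\setminus A')+D$ with multiset sum rather than $\cup$.
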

Proof omitted.

% ====================================================================

\section{Approximate detailed reaction kinetics} \label{sec:detailed-rxn-kinetics}

\begin{description}

	\item[(2,1) Binding] --- $k = \SI{1e6}{\per\Molar\per\second}$, according to experimental evidence\cite{Zhang:2009ce,Morrison:1993uk}.
	
	\item[(1,1) Binding] --- The energetics of 1,1 binding depend strongly on whether the reaction forms an entropically unfavorable ``internal loop'' or ``bulge.'' We therefore provide three separate rate constant formulas for these different conditions. Zipping is binding between two adjacent \segments (e.g. between $d_{i,j}$ and $d_{i,j+1}$). Hairpin closing is where two non-adjacent \segments bind to form a hairpin. Bulge closing is where two non-adjacent \segments bind to form a more complex internal loop.
		\begin{description}
			\item[Zipping] --- $k = 10^{8} / \ell~\si{\per\second}$, where $\ell$ is the length of the \segment. Estimates range from \SI{e-7}{\second} to \SI{e-8}{\second} for the zipping time of a single base pair.\cite{Wetmur:1968vj,Porschke:1974vp}
			
			\item[Hairpin closing] --- $k = (\num{2.54e8}) (\ell + 5)^{-3} ~\si{\per\second}$, where $\ell$ is the length of the hairpin. Bonnet et al. measured various hairpin closing rates and determine empirically that the length dependence is approximately given by 
			$$k = a (\ell + 5)^{-c}$$
			where the exponent $c$ depends on the temperature and the parameter $a$ is to be fitted to the data\cite{Bonnet:1998tt}. For \SI{25.7}{\degreeCelsius}, the authors report $c = 3$ provides the best fit. We estimated $a$ by a linear fit of $(\ell+5)^{-3}$ to experimentally measured values of $k$, using the following data from Fig. 7 of \cite{Bonnet:1998tt}:

			\begin{table}[h]
		    \centering
			\begin{tabular}{lll}
			$\ell$ & $(\ell+5)^{-3}$ & $k$  \\
			\hline
			30 & $(30+5)^{-3}$ & 5000  \\
			21 & $(21+5)^{-3}$ & 10000 \\
			16 & $(16+5)^{-3}$ & 20000 \\
			12 & $(12+5)^{-3}$ & 50000
			\end{tabular}
			\end{table}

			\item[Bulge closing] --- $k = (\num{2.54e8}) (\ell' + 5)^{-3} ~\si{\per\second}$, where $\ell' = |y| + |w| + 5$ approximates the size of the bulge. We used the same expression as for hairpins, but must account for the additional length from the region of unpaired bases in the bulge. Consider binding between two strands, as in \fig{\ref{fig:bulge-formation}}. Let $\ell' = |y| + |w| + 5$, such that the rate constant is given by $k = a (\ell' + 5)^{-3}$.

			\begin{figure}[h]
			\centering
% 			\begin{verbatim}
% x   y   z       x   y   z   
% ----------      ----------  
% |||             |||   ||||
% ___         ->  ___   ____      
% x* \            x* \_/ z*       
%  w* \____           w*    
%        z*             
% 			\end{verbatim}
			\includegraphics{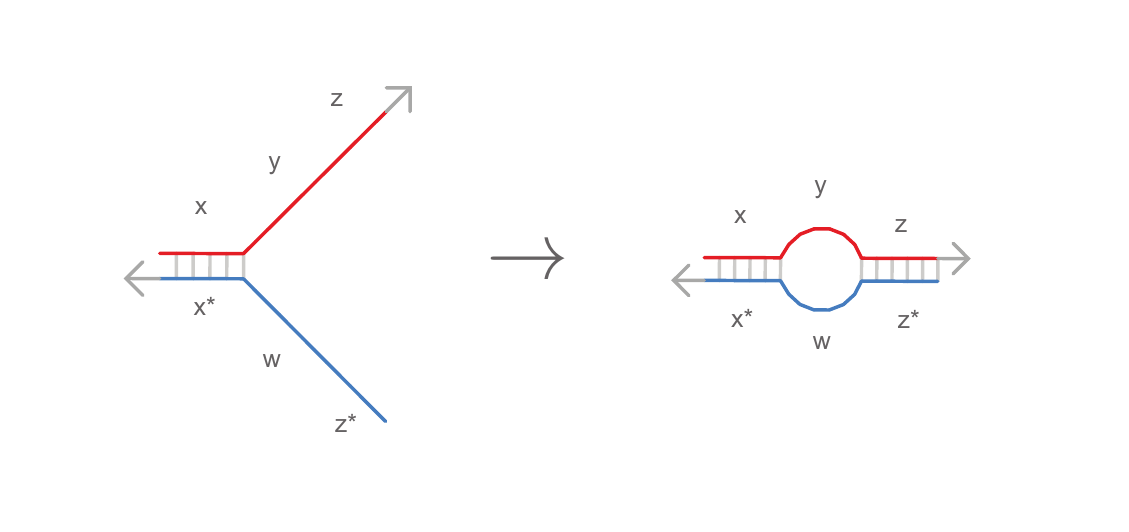}
			\caption{Bulge formation}
			\label{fig:bulge-formation}
			\end{figure}

			\item[Multiloop closing] --- $k = (\num{2.54e8}) (\ell' + 5)^{-3} ~\si{\per\second}$, where $\ell' = \sum_{i=1}^{m} |d_i| + 5(m-1)$ approximates the size of the multiloop containing \segments $d_1, d_2, \ldots d_m$. We used a similar expression as for bulges, but must generalize to account for the presence of $m$ stems and loop segments within the multiloop. We therefore treat the multiloop size as the length of each \segment, plus 5 nucleotides for each stem.
			
		\end{description} 

	\item[Opening] --- $k = 10^{6 - 1.24 \ell}$. Note that 2,1 binding is the reverse reaction of 1,2 opening. Experimental evidence suggests this ratio depends exponentially on the length $\ell$ of the binding/opening \segment\cite{Wetmur:1991dh}, and therefore:
	$$e^{\Delta G^\circ/RT} = \frac{k_\text{reverse}}{k_\text{forward}} = \frac{\text{1,2 opening rate}}{\text{2,1 binding rate}} = 10^{-a \ell}~\si{\Molar}$$
	where $\Delta G^\circ$ is the Gibbs free energy at standard conditions, $T$ is the temperature in Kelvin, $R$ is the ideal gas constant, and $a$ is an unknown constant. Given the binding rate constant $k_\text{forward} = \SI{1e6}{\per\Molar\per\second}$, this means that the opening rate constant $k_\text{reverse} = 10^{(6 - a \ell)}~\si{\per\second}$. We know that the typical energy of a single base stack is $-1.7$ \si{\kcal\per\mole} at $T = 298$ \si{\K} (25 \si{\degreeCelsius}) in a typical salt buffer\cite{Wetmur:1991dh}. We can use this information to solve for the constant $a$:
	$$e^{\Delta G^\circ/RT} = a 10^{-\ell} \implies \Delta G^\circ = (-RT \ln 10) a \ell$$
	$$\Delta G^\circ = - 1.7 \ell \implies -1.7 \ell = (-RT \ln 10) a \ell$$
	$$\implies a = 1.7 / (RT \ln 10) \approx 1.24$$
	for $R$ = \SI{1.99e-3}{\kcal\per\K\per\mol}, $T$ = \SI{298}{\K}.

	\item[3-way branch migration] --- We can consider the 3-way branch migration process as consisting of an initiation step that takes some time $a~\si{\second}$ and a series of branch migration steps; the time for branch migration steps varies quadratically with the length $\ell$ of the \segment\cite{Zhang:2009ce,Srinivas:2013jo}, with some timescale $b~\si{\second}$ each. The expected time $\Ex{t}$ for 3-way branch migration of $\ell$ bases, including an initiation time $b$, is 
	$$\Ex{t} = a + b \ell^2.$$
	Imperfectly approximating the completion time of branch migration as a Poisson rate with the same expected time, the rate constant $k$ is therefore
	$$k = \frac{1}{\Ex{t}} = \frac{1}{a + b \ell^2}$$
	where the values of $a$ and $b$ must be determined experimentally, and represent the expected time for initiation and individual branch migration steps, respectively\cite{Srinivas:2013jo}. We assume that the values are different for branch migration between adjacent \segments ($d_{i,j}$ and $d_{i,j+1}$) and for remote toehold-mediated branch migration\cite{Genot:2011gt}, where the \segments are non-adjacent. 
		\begin{description}
			\item[Adjacent] --- $k = \left[(\num{2.8e-3}) + (\num{0.1e-3}) \ell^2 \right]^{-1}~\si{\per\second}$. Srinivas et al. measured $a = \SI{2.8e-3}{\second}$ and $b = \SI{0.1e-3}{\second}$\cite{Srinivas:2013jo}.
			\item[Remote toehold-mediated] --- $k = \left[ \rho(\ell) (\num{2.8e-3}) + (\num{0.1e-3}) \ell^2 \right]^{-1}~\si{\per\second}$ We assume the same step rate $b$ as for adjacent branch migration above, but assume that the initiation time $a$ is a factor of $\rho(\ell')$ slower for remote toehold-mediated branch migration than for adjacent branch migration. We define $\rho(\ell) = \alpha/k_\text{multiloop closing}(\ell')$, where $k_\text{multiloop closing}(\ell')$ is calculated by the rate constant formula given above for multiloop closing over a bulge length $\ell'$, and $\alpha$ is a constant for remote toeholds, fitted to the experimental data.
		\end{description}

	\item[4-way branch migration] --- $k = \frac{1}{77 + \ell^2}~\si{\per\second}$. We use a similar expression to that for 3-way branch migration. Dabby measured $a = \SI{77}{\second}$ and $b = \SI{1}{\second}$ \cite{Dabby:2013uq}. 

\end{description}

% ====================================================================

\section{Nucleic acid secondary structure and pseudoknots}

Nucleic acid structures can be divided into two classes: those with base pairs that can be nested into a tree-like structure are called ``non-pseudoknotted'' (\fig{\ref{fig:structures}}), while those that do not obey this nesting property are ``pseudoknotted'' \cite{Staple:2005ga,Liu:2010he} (\fig{\ref{fig:pseudoknots}}). There are vastly more possible pseudoknotted structures than non-pseudoknotted structures, so allowing enumeration of pseudoknotted intermediates greatly increases the possible size of the generated reaction network. It would be attractive to enumerate only a subset of pseudoknotted secondary structures, but a poor understanding of the energetics of pseudoknotted structures makes it very hard to determine \emph{which} pseudoknots to allow and which to omit\cite{Turner:1988gl,Mathews:2004wz,isambert2000modeling}. Further, reactions (such as three-way branch migration) that are always plausible for non-pseudoknotted structures can yield topologically-impossible structures for pseudoknotted intermediates. For the sake of simplicity and efficiency we will therefore restrict our attention to non-pseudoknotted intermediates, for the time being. 

\begin{figure}
\includegraphics{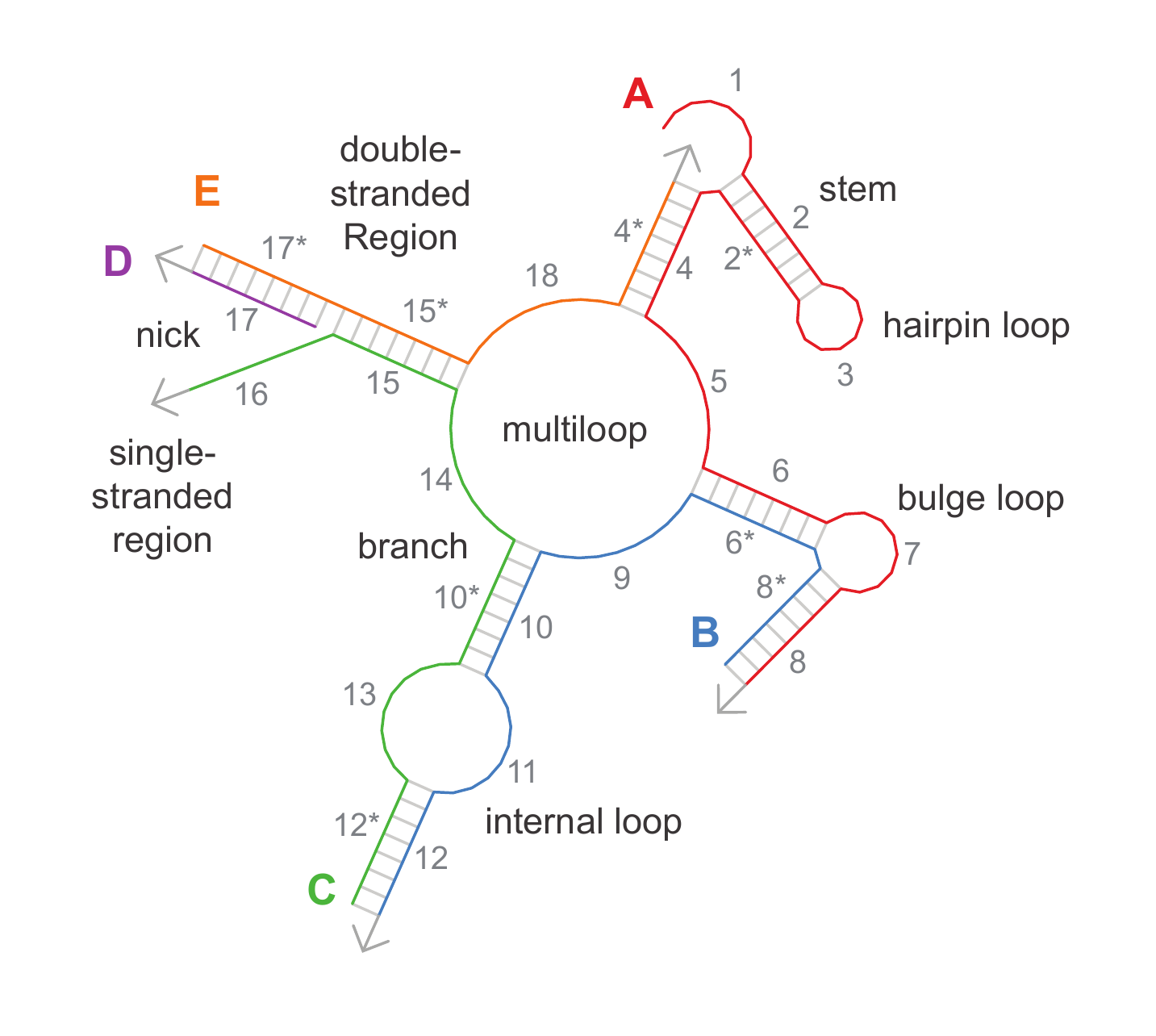}
\caption[Non-pseudoknoted structures]{Non-pseudoknoted secondary structures. Our enumerator is capable of handling a wide range of non-pseudoknotted secondary structures, depicted above.}
\label{fig:structures}
\end{figure}

\begin{figure}
\includegraphics{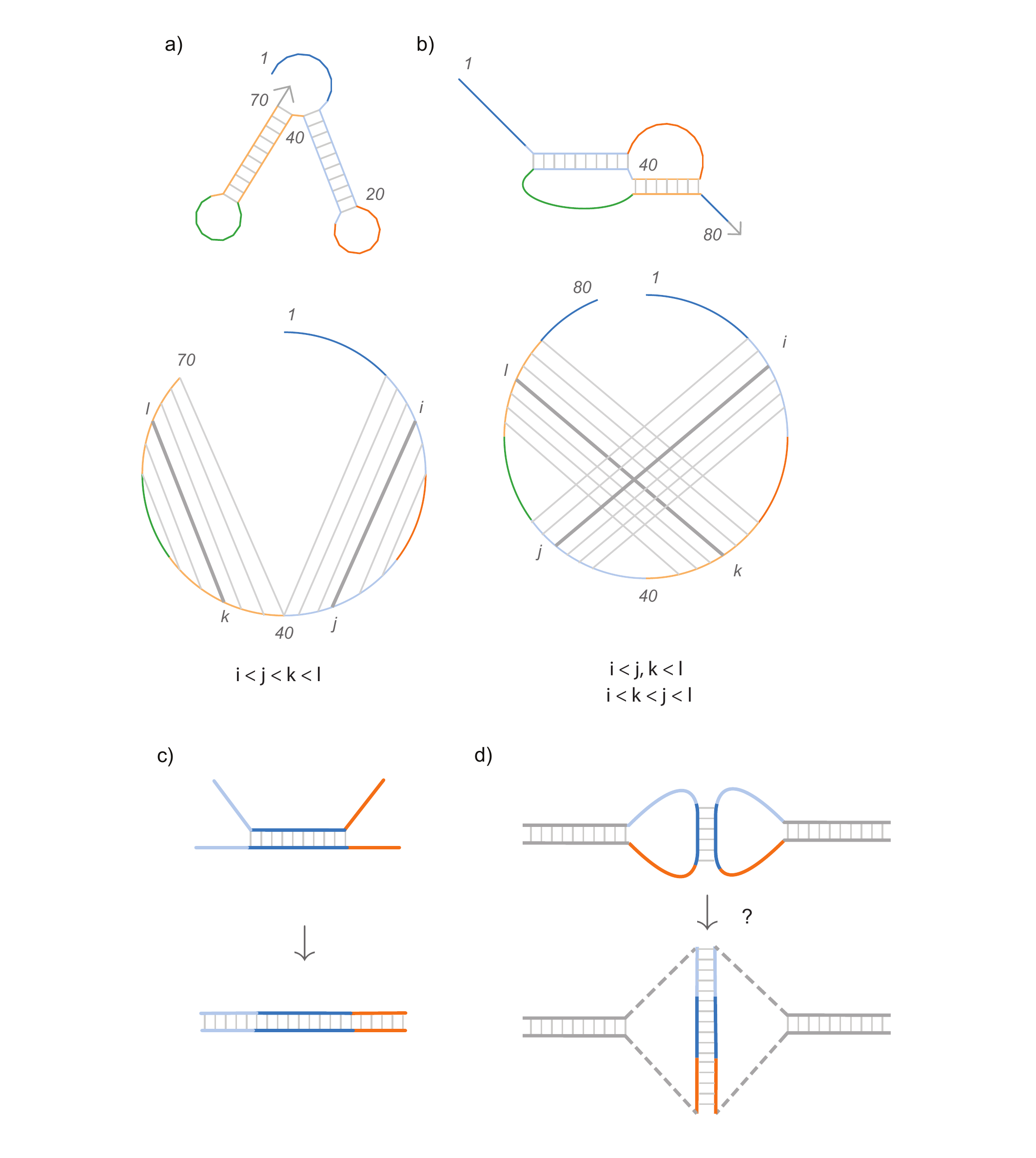}
\caption[Pseudoknotted structures]{Pseudoknotted vs. non-pseudoknotted structures. \subfigure{a} Structure containing no pseudoknots; all base pairs are either nested ($i < k < l < j$), or adjacent ($i < j < k < l$). \subfigure{b} Structure containing a pseudoknot; $i < j$ and $k < l$, but $i < k < j < l$, so the nesting property is not satisfied. This can be seen geometrically in the circle diagram. \subfigure{c} For non-pseudoknotted complexes, 1-1 binding is always permissible and yields plausible structures. \subfigure{d} For a pseudoknotted complex such as this \term{kissing-loop} motif, binding between adjacent domains may produce unrealistic structures (e.g. binding between adjacent domains in the kissing-loop duplex would likely require partial unwinding of the duplexes). Lack of a coherent and efficient energetic model for structures with pseudoknots makes the plausibility of these structures difficult to evaluate.}
\label{fig:pseudoknots}
\end{figure}

% ====================================================================

\section{Limitations} \label{sec:limitations}

As mentioned in the main text, there are two major limitations to the enumerator, as presented here.

The first limitation precludes several behaviors, such as partial binding between
similar \segments with small numbers of nucleotide mismatches (a concept which has 
been exploited to precisely control the thermodynamics and therefore 
specificity of DNA hybridization reactions \cite{Zhang:2012ct}), as well as 
partial binding at the edges between two adjacent \segments. 

The second limitation precludes consideration of the wide range of structures
that contain pseudoknots.  The distinction between pseudoknotted and non-pseudoknotted complexes is demonstrated in \fig{\ref{fig:pseudoknots}}
Essentially, a pseudoknotted complex is one which contains two intercalated stem-loop
structures. Consider a complex with $N$ nucleotides, each numbered $1, 2, \ldots, N$. 
Formally, this is \emph{non-pseudoknotted} if, for every
two base pairs $i, j$ and $k, l$ (where $i, j, k, l$ are the numerical
indices of four nucleotides in a complex) where $i < j$, $k < l$, and
$i < k$, then either $i < k < l < j$ (that is, the pair $k, l$ is nested
within the pair $i, j$) or $i < j < k < l$ (that is, the pairs $i, j$
and $k, l$ occur sequentially). Pseudoknotted structures do not obey
this nesting property. This can be easily seen by drawing the DNA backbone
of a pseudoknotted structure around a circle, and drawing base pairs as chords 
of the circle; if these chords cross, a complex is pseudoknotted..

It should be apparent that there are
combinatorially more pseudoknotted structures than non-pseudoknotted
structures. The presence of pseudoknots has important implications for
the energetics of a structure---notably, many parameters of the
energetic models that incorporate pseudoknots have not been measured and
as a consequence it is difficult to accurately calculate the free energy
and therefore estimate the stability of pseudoknotted structures. Additionally,
certain reactions have non-intuitive behavior; for instance, consider a 1-1 
binding reaction between adjacent, complementary domains. For non-pseudoknotted
structures, this reaction is always permissible. For pseudoknotted complexes,
it is easy to consider scenarios where a naive branch migration would lead to
physically implausible structures (see \fig{\ref{fig:pseudoknots}}).
For simplicity and efficiency, we therefore do not consider pseudoknotted complexes.

% ====================================================================

% ====================================================================

\section{Implementation}

We have implemented the reaction enumeration algorithm, as well as the algorithm for condensing reactions, in a command-line utility written in Python. The enumerator allows input to be specified using the Pepper Intermediate Language (PIL)---a flexible, text-based language for describing DNA strand displacement systems \cite{Ligocki:2010wc}---as well as a simple, JSON-based interchange format. The enumerator can produce both full and condensed reaction spaces for a set of starting complexes. It can also write generated systems to several output formats: 

\begin{description}
\item [Pepper Intermediate Language (PIL)] --- Output format resembles input format \cite{Ligocki:2010wc}, but additional enumerated complexes and reactions are added.
\item [Systems Biology Markup Language (SBML)\cite{Hucka:2003wg}] --- Standard format used by many modeling packages in systems and synthetic biology. Allows models to be transferred to numerous simulation and analysis utilities. 
\item [Chemical Reaction Network (CRN)] --- Simple plain text format listing each reaction on a single line. In this format, some reaction $A + B \to C + D$ would be written \texttt{A + B -> C + D}. 
\item [ENJS] --- JSON-based output format; ENJS can be visualized interactively using \name---this visualization is discussed in Chapter 2, and examples are shown in Chapter 6 of ref. \cite{Grun:2014vd}.
\end{description}

% =============================================================================

\section{Supplemental Pseudocode}

	\begin{algorithm}[H]
	\caption{Bind 1-1 move type}
	\label{alg:bind11}
	\begin{algorithmic}[1]
		\Function{bind11}{$c : \type{Complex}$}
			\State $R \gets \set{~}$ \Comment{Initialize empty set of reactions}
			\State $c = (S, T), S = \set{s_1, s_2, \ldots s_{|S|}}$
			\ForAll{$i \in \set{1 \ldots |S|}$} \Comment{For each strand}
				\State $s_i = (z, D)$
				\ForAll{$j \in \set{1 \ldots |D|}$} \Comment{For each \segment}
					\If{$T_{i,j} = \varnothing$} \Comment{If \segment $d_j$ is unpaired}
						\State $i' \gets i$
						\State $j' \gets j+1$
						\While{$i' < |S|$ and $(i',j') \ne (i,j)$} \Comment{Iterate over all \segments with higher indices}
							\State $s_{i'} = (z', D'), D' = \set{d'_1, d'_2, \ldots}$
							\If{$j' \ge |D'|$} \Comment{If at the end of a strand}
								\State $(i',j') \gets (i' + 1, 0)$ \Comment{Move to first \segment on next strand}
							\ElsIf{$T_{i', j'} \ne \varnothing$} \Comment{If \segment is paired}
								\State $(i', j') \gets T_{i',j'}$ 
								\Statex \Comment{Skip over the internal loop between $i,j$ and $i',j'$ so we don't make a pseudoknot}
							\ElsIf{$d_j = {d'}_{j'}^*$} \Comment{If $d_j$ is complementary to $d_j'$}
								\State $T' \gets T$ \Comment{Make a new structure}
								\State $T'_{i,j} \gets (i', j')$, $T'_{i',j'} \gets (i,j)$ \Comment{Where \segments $(i,j)$ and $(i',j')$ are paired}
								\State $c' \gets (S, T')$
								\State $R \gets R \cup \set{ (\mset{c}, \mset{c'}) }$ \Comment{Make a new reaction that yields this complex}
							\EndIf
							\State $j' \gets j' + 1$ \Comment{Go to next \segment}
						\EndWhile
					\EndIf
				\EndFor
			\EndFor
			\State \Return $R$
		\EndFunction
	\end{algorithmic}
	\end{algorithm}

	\begin{algorithm}[H]
	\caption{Bind 2-1 move type}
	\label{alg:bind21}
	\begin{algorithmic}[1]
		\Function{bind21}{$c : \type{Complex}$, $c' : \type{Complex}$}
			\State $R \gets \set{~}$ \Comment{Initialize empty set of reactions}
			\State $c = (S, T)$, $c' = (S', T')$
			\ForAll{$d$ such that $\exists (z,D) \in S$, $d \in D$} \Comment{For each \segment on each strand in $c$}
				\ForAll{$d'$ such that $\exists (z',D') \in S'$, $d' \in D'$} \Comment{For each \segment on each strand in $c'$}
					\If{$d' = d^*$} \Comment{If $d$ and $d'$ are complementary}
						\State $S'' \gets S \cup S'$
						\State $T'' \gets $ combine structures, pair $d$ and $d'$
						\State $c'' \gets (S'', T'')$ \Comment{Make new complex that joins $c$ and $c'$}
						\State $R \gets R \cup \set{ ( \mset{c, c'}, \mset{c''} ) }$ \Comment{Make reaction that yields this complex}
					\EndIf
				\EndFor
			\EndFor
			\State \Return $R$
		\EndFunction
	\end{algorithmic}
	\end{algorithm}

	\begin{algorithm}[H]
	\caption{Opening move type}
	\label{alg:open}
	\begin{algorithmic}[1]
		\Function{open}{$c : \type{Complex}$}
			\State $R \gets \set{~}$ \Comment{Initialize empty set of reactions}
			\State $c = (S, T), S = \set{s_1, s_2, \ldots s_{|S|}}$
			\ForAll{$i \in \set{1 \ldots |S|}$} \Comment{For each strand}
				\State $s_i = (z, D), D = \set{d_1, d_2, \ldots d_{|D|}}$
				\ForAll{$j \in \set{1 \ldots |D|}$} \Comment{For each \segment}
					\If{$T_{i,j} \ne \varnothing$ and $T_{i,j} > (i,j)$} \Comment{If \segment $d_j$ is paired to a later \segment}

						\Statex \Comment{Find the beginning and end of the helix where $d_j$ is bound}

						\State $A = (i_A, j_A) \gets (i,j)$ \Comment{Beginning of the helix on the ``top'' strand}
						\State $B = (i_B, j_B) \gets T_{i,j}$ \Comment{Beginning of the helix on the ``bottom'' strand}
						\State $A' = (i_{A'}, j_{A'}) \gets A$ \Comment{End of the helix on the ``top'' strand}
						\State $B' = (i_{B'}, j_{B'}) \gets B$ \Comment{End of the helix on the ``bottom'' strand}

						\State $\ell \gets |d_j|$ \Comment{Keep track of helix length in nucleotides}

						\State \Comment{Find the end of the helix}
						\While{$j_A' < |s_{i_{A'}}|$ and $j_B' \ge 0$ and $A' = T_{B'}$} \Comment{While neither strand is broken and \segments are paired}
							\State $j_{A'} \gets j_{A'} + 1$ \Comment{Move right}
							\State $j_{B'} \gets j_{B'} - 1$
							\State $\ell \gets \ell + |d_{j_{A'}}|$ 
						\EndWhile

						\State \Comment{Find the beginning of the helix}
						\While{$j_A \ge 0$ and $j_B < |s_{i_{B}}|$ and $A = T_{B}$} \Comment{While neither strand is broken and \segments are paired}
							\State $j_A \gets j_A - 1$ \Comment{Move left}
							\State $j_B \gets j_B + 1$
							\State $\ell \gets \ell + |d_{j_{A}}|$ 
						\EndWhile

						\If{$\ell < L$} \Comment{If the total length of the helix is less than the threshold}
							\State $T' \gets T$
							\ForAll {$j' \in \set{j_A \ldots j_{A'}}$}
								\State Update $T'$ to break pair $i, j'$
							\EndFor

							\If{$S, T'$ is not connected}
								\State $c', c'' \gets $ split $T'$ and $S$ into two connected complexes

								\State $R \gets R \cup \set{ ( \mset{c},\mset{c', c''} ) }$ \Comment{Make reaction that yields these complexes}
							\Else
								\State $c' \gets (S, T')$ \Comment{Make new complex}
								\State $R \gets R \cup \set{ ( \mset{c},\mset{c'} ) }$ \Comment{Make reaction that yields this complex}
							\EndIf
						\EndIf

					\EndIf
				\EndFor
			\EndFor
			\State \Return $R$

		\EndFunction
	\end{algorithmic}
	\end{algorithm}

	\begin{algorithm}[H]
	\caption{3-way branch migration move type}
	\label{alg:3way}
	\begin{algorithmic}[1]
		\Function{3way}{$c : \type{Complex}$}
			\State $R \gets \set{~}$ \Comment{Initialize empty set of reactions}
			\State $c = (S, T), S = \set{s_1, s_2, \ldots s_{|S|}}$
			\ForAll{$i \in \set{1 \ldots |S|}$} \Comment{For each strand}
				\State $s_i = (z, D), D = \set{d_1, d_2, \ldots d_{|D|}}$
				\ForAll{$j \in \set{1 \ldots |D|}$} \Comment{For each \segment}
					\State \Comment{Look to the left}
					\If{$T_{i,j} \ne \varnothing$ and $j \ne |D|$ and $T_{i,j+1} = \varnothing$}
						\State $(i', j') \gets (i, j+1)$ \Comment{$(i,j+1)$ will be the invading \segment} 

						\Repeat \Comment{Search \emph{left} to find a bound \segment $(i',j')$ that can be displaced}
							\State $j' \gets j'-1$
							\If{$T_{i', j'} \ne \varnothing$ and $d_{i', j'}$ can pair $d_{j+1}$}
								\State $T' \gets T$ where $T'_{i', j'} = (i, j+1)$, $T'_{i, j+1} = (i', j')$  

								% Make new reaction
								\If{$S, T'$ is not connected}
									\State $c', c'' \gets $ split $T'$ and $S$ into two connected complexes

									\State $R \gets R \cup \set{ ( \mset{c},\mset{c', c''} ) }$ \Comment{Make reaction that yields these complexes}
								\Else
									\State $c' \gets (S, T')$ \Comment{Make new complex}
									\State $R \gets R \cup \set{ ( \mset{c},\mset{c'} ) }$ \Comment{Make reaction that yields this complex}
								\EndIf
							\EndIf
							\State $(i', j') \gets T_{i', j'}$ \Comment{Follow the structure}
						\Until{$j' = -1$ or $T_{i', j'} = \varnothing$ or $(i',j') = (i,j)$}
					\EndIf

					\State \Comment{Look to the right}
					\If{$T_{i,j} \ne \varnothing$ and $j \ne 0$ and $T_{i,j-1} = \varnothing$}
						\State $(i', j') \gets (i, j-1)$  \Comment{$(i,j-1)$ will be the invading \segment} 
						
						\Repeat \Comment{Search \emph{right} to find a bound \segment $(i',j')$ that can be displaced}
							\State $j' \gets j'+1$
							\If{$T_{i', j'} \ne \varnothing$ and $d_{i', j'}$ can pair $d_{j-1}$}
								\State $T' \gets T$ where $T'_{i', j'} = (i, j-1)$, $T'_{i, j-1} = (i', j')$

								% Make new reaction
								\If{$S, T'$ is not connected}
									\State $c', c'' \gets $ split $T'$ and $S$ into two connected complexes

									\State $R \gets R \cup \set{ ( \mset{c},\mset{c', c''} ) }$ \Comment{Make reaction that yields these complexes}
								\Else
									\State $c' \gets (S, T')$ \Comment{Make new complex}
									\State $R \gets R \cup \set{ ( \mset{c},\mset{c'} ) }$ \Comment{Make reaction that yields this complex}
								\EndIf

							\EndIf
						\Until{$j' = |D|$ or $(i',j') = (i,j)$}
					\EndIf
				\EndFor
			\EndFor
			\State \Return $R$
		\EndFunction
	\end{algorithmic}
	\end{algorithm}

	\begin{algorithm}[H]
	\caption{4-way branch migration move type}
	\label{alg:4way}	
	\begin{algorithmic}[1]
		\Function{4way}{$c : \type{Complex}$}
			\State $R \gets \set{~}$ \Comment{Initialize empty set of reactions}
			\State $c = (S, T), S = \set{s_1, s_2, \ldots s_{|S|}}$
			\ForAll{$i \in \set{1 \ldots |S|}$} \Comment{For each strand}
				\State $s_i = (z, D), D = \set{d_1, d_2, \ldots d_{|D|}}$
				\ForAll{$j \in \set{1 \ldots |D|}$} \Comment{For each \segment}
					
					\If{$T_{i,j} \ne \varnothing$ and $j < |D|$} \Comment{\segment $d_j$ must be bound, not be at the end of a strand}

						\State $A \gets (i, j+1)$  \Comment{Displacing \segment}
						\State $B \gets T_{i, j+1}$ \Comment{Displaced \segment}
						\If{$B = \varnothing$} Continue \EndIf

						\State $C = (i_C, j_C) \gets T_{i, j}$
						\State $C \gets (i_C, j_C-1)$ \Comment{Template \segment (replaces B, binds A)}
						\If{$j_C < 1$} Continue \EndIf

						\State $D \gets T_{C}$ \Comment{(replaces A, binds B)}
						\If{$D = \varnothing$} Continue \EndIf

						\If{$B \ne C$ and $d_A = d_C^*$ and $d_B = d_D^*$} \Comment{If this is a four-way branch migration}
							\State $T' \gets T$, $T'_A \gets C, T'_C \gets A$; $T'_B \gets D, T'_D \gets B$

							% Make new reaction
							\If{$S, T'$ is not connected}
								\State $c', c'' \gets $ split $T'$ and $S$ into two connected complexes

								\State $R \gets R \cup \set{ ( \mset{c},\mset{c', c''} ) }$ \Comment{Make reaction that yields these complexes}
							\Else
								\State $c' \gets (S, T')$ \Comment{Make new complex}
								\State $R \gets R \cup \set{ ( \mset{c},\mset{c'} ) }$ \Comment{Make reaction that yields this complex}
							\EndIf

						\EndIf

					\EndIf

				\EndFor
			\EndFor
			\State \Return $R$
		\EndFunction
	\end{algorithmic}
	\end{algorithm}

	\begin{algorithm}[H]
	\caption{Reaction enumeration}
	\begin{algorithmic}[1]
	\Procedure{Enumerate}{$A : \set{\type{Complex}}$}
			\State $\li{E} \gets \set{}; \li{S} \gets \set{}; \li{T} \gets \set{}$      \Comment{Complexes}
			\State $\li{R} \gets \set{}$                                                \Comment{Reactions}
			\State $\li{Q} \gets \set{}$                                                \Comment{Resting states}
			\State $\li{B} \gets A$
			\While{ $\li{B} \ne \set{}$ }                                               \Comment{Enumerate fast reactions from $A$}
					\State $b \gets$ pop($\li{B}$)
					\State $(S', T', Q', R') \gets$ \Call{enumerateNeighborhood}{$b$}   \Comment{Find fast reactions from $b$}
					\State $\li{S} \gets \li{S} \cup S'$; $\li{T} \gets \li{T} \cup T'$; $\li{R} \gets \li{R} \cup R'$; $\li{Q} \gets \li{Q} \cup Q'$

			\EndWhile
			
			\While{$\li{S} \ne \set{}$}                                                 \Comment{Enumerate slow reactions between resting state complexes}
					\State $s \gets$ pop($\li{S}$)
					\State $(R', B') \gets$ \Call{getSlowReactions}{$s$, $\li{S} \cup \li{E}$}      \Comment{Find slow reactions from $s$}
					\State $\li{E} \gets \li{E} \cup \set{s}$                               \Comment{$s$ moves to $\li{E}$ once slow reactions are enumerated}
					\State $\li{R} \gets \li{R} \cup R'$                                    \Comment{Store new reactions}
					\State $\li{B} \gets \li{B} \cup B' \setminus (\li{E} \cup \li{S} \cup {T})$ \Comment{Store new complexes}

					\While{ $\li{B} \ne \set{}$ }                                           \Comment{Enumerate fast reactions from $B$}
							\State $b \gets$ pop($\li{B}$)
							\State $(S', T', Q', R') \gets$ \Call{neighborhood}{$b$}   \Comment{Find fast reactions from $b$}
							\State $\li{S} \gets \li{S} \cup S'$; $\li{T} \gets \li{T} \cup T'$; $\li{R} \gets \li{R} \cup R'$; $\li{Q} \gets \li{Q} \cup Q'$
					\EndWhile

			\EndWhile
	\EndProcedure
	\newline

	\Procedure{enumerateNeighborhood}{$c : \type{Complex}$}                     \Comment{Calculates fast reactions from $c$, sorts complexes into resting state complexes/transient complexes}
			\State $\li{F} = \{ c \}$                                           \Comment{Complexes from fast reactions in neighborhood}
			\State $\li{N} = \set{ }$                                           \Comment{Complexes in neighborhood}
			\State $\li{R}_N = \set{ }$                                         \Comment{(Fast) Reactions in neighborhood}
			\While{ $\li{F} \ne \set{}$ }                                       \Comment{Enumerate fast reactions from each complex in $F$}
					\State $f \gets$ pop($\li{F}$)
					\State $(R', F') \gets$ \Call{getFastReactions}{$f$}        \Comment{Find fast reactions from $f$}
					\State $\li{F}   \gets \li{F} \cup F' \setminus \li{N}$
					\State $\li{N}   \gets \li{N} \cup F'$
					\State $\li{R}_N \gets \li{R}_N \cup R'$
			\EndWhile

			\State Apply Tarjan's algorithm\cite{Tarjan:1972hk} to find strongly-connected components of the directed graph $G = (\li{N},\li{R}_N)$
			\State $Q' \gets \set{\text{strongly-connected components of } G \text{ with no outgoing fast reactions}}$             \Comment{Resting states are SCCs of $G$}
			\State $S' \gets \set{ s : s \in q ~\text{for any}~ q \in Q'}$      \Comment{resting state complexes are in a resting state}
			\State $T' \gets \li{N} \setminus S'$                               \Comment{Transient complexes are everything else}

			\State \Return $(S', T', Q', \li{R}_N)$
	\EndProcedure
	\newline

	\Procedure{getFastReactions}{$c : \type{Complex}$}
		\Statex \Comment{Calculates all fast (unimolecular) reactions that consume $c$}
		\State $R \gets$ fast reactions consuming $c$, $C \gets$ union of products of reactions in $R$
		\State \Return $R, C$
	\EndProcedure

	\Procedure{getSlowReactions}{$c : \type{Complex}$, $S : \set{\type{Complex}}$}
		\Statex \Comment{Calculates all slow (bimolecular) reactions that consume $c$ and an element of $S$}
		\State $R \gets$ slow reactions consuming $c$ and $s \in S$, $C \gets$ union of products of reactions in $R$
		\State \Return $R, C$
	\EndProcedure

	\end{algorithmic}
	\label{alg:rxn-enum}
	\end{algorithm}

	\begin{algorithm}[H]
	\caption{Condensing Reactions}
	\begin{algorithmic}[1]

	\State $\FT{x} \gets$ undefined $\forall$ complexes $x$              \Comment{The map $\FT : \type{Complex} \to \set{\type{Fate}}$ is global and begins empty}
	\State $S \gets \set{~}$              \Comment{The map $S : \type{Complex} \to \set{\type{Complex}}$ is global and begins empty}
	\Statex

	\Procedure{Condense}{$\li{C} : \set{\type{Complex}}$, $\li{R} : \set{\type{Reaction}}$}     \Comment{Computes the fates for each complex, then generates a set of condensed reactions}
			\State $\li{R}_s \gets \set{r : r \in \li{R}, r~\text{is slow} }$                       \Comment{Slow reactions}
			\State $\li{R}_f \gets \li{R} \setminus \li{R}_s$                                       \Comment{Fast reactions}
			\State $\li{R}_f^{(1,1)} \gets \set{r \in \li{R}_f : \alpha(r) = (1,1) }$				\Comment{Fast (1,1) reactions}
			\State Use Tarjan's algorithm\cite{Tarjan:1972hk} to compute the set of strongly-connected components 
			\Statex from the graph $\Gamma = (\li{C}, \li{R}_f^{(1,1)})$
			\State $S \gets$ the set of strongly-connected components of $\Gamma$
			\State $\SC(x) \gets$ the strongly-connected component containing complex $x$, $\forall x \in \li{C}$

			\ForAll{$\li{C}_c \in S$}                                                               \Comment{For each SCC $\li{C}_c$ of $\Gamma$}
					\State \Call{computeFates}{$C_c$, $\li{R}_f$}
			\EndFor
			\State \Return \Call{condenseReactions}{$\li{R}_s$}
	\EndProcedure
	\Statex

	\Procedure{computeFates}{$\li{C} : \set{\type{Complex}}$, $\li{R}_f : \set{\type{Reaction}}$}

			\State $R_o \gets \set{r = (A, B) \in \li{R}_f : A \subseteq \li{C}, B \setminus \li{C} \neq \emptyset }$               \Comment{Outgoing fast reactions}		
			\If{$|R_o| = 0$}                                                                    \Comment{If no outgoing fast reactions}
					\State $\FT(c) \gets \mset{\li{C}}~\forall~c \in \li{C}$                    \Comment{$\FT(c)$ is the resting state $\li{C}$ containing the complex $c$}
			\Else                                                                               \Comment{If there are outgoing fast reactions}
					\ForAll{$c \in \li{C}$}
							\State $R_o^(1,n) \gets \set{r \in R_o : \alpha(r) = (1, n)}$
							\State $P_o \gets \bigcup_{r = (A, B) \in R_o^(1,1)} B$ 
							\ForAll{$x \in P_o$}
									\State If $\FT(x)$ is undefined, \Call{computeFates}{$\SC(x)$, $\li{R}_f$}
							\EndFor
							\State $\FT(c) \gets \displaystyle\bigcup_{r = (A,B) \in R_o} \left( \bigoplus_{b \in B} \FT(b) \right)$  \Comment{$\FT(c)$ are the possible fates from outgoing reactions}
					\EndFor
			\EndIf 

	\EndProcedure
	\Statex

	\Procedure{condenseReactions}{$\li{R}_s : \set{\type{Reaction}}$} \Comment{Condensed reaction space from the set of slow reactions}
			\State $\li{\hat{R}} \gets \set{~}$                     \Comment{Condensed reactions}
			\ForAll{$s = (A, {b}) \in \li{R}_s$}
					\State $A'   \gets \sum_{a \in A} \FT(a)$          \Comment{Fates of reactants are all trivial}
					\ForAll{$B' \in \FT(b)$}							\Comment{For each fate of $b$}
						\State $r' \gets (A', B')$						\Comment{Generate new reaction}
						\State $\li{\hat{R}} \gets \li{\hat{R}} \cup {r'}$ 
					\EndFor
			\EndFor
			\State \Return $\li{\hat{R}}$
	\EndProcedure

	\end{algorithmic}
	\label{alg:rxn-condense}
	\end{algorithm}

\end{document}